\newcommand{\RR}{\mathbb R}
\newcommand{\ZZ}{\mathbb Z}
\newcommand{\CC}{\mathbb C}
\def\cC{\mathcal C}
\def\cD{\mathcal D}
\def\cQ{\mathcal Q}
\def\cF{\mathcal F}
\newcommand{\map}{\rightarrow}
\newcommand{\functor}{\longrightarrow}
\def\eC{\EuScript C}
\def\eD{\EuScript D}
\def\eF{\EuScript F}
\def\eK{\EuScript K}
\def\C{\textup{C}}
\def\idob{{\bf 1}}
\def\I{\textup{I}}
\def\id{\mathrm{id}}
\def\KKcat{\mathtt{KK}}
\def\Hom{\textup{Hom}}
\def\ker{\textup{ker}}
\def\fusion{\circledast}
\def\K{\textup{rk}}
\def\cob{\textup{Cob}}
\def\Cob{\textup{hCob}}
\def\Mod{\rm Mod}
\def\H{\textup{H}}
\def\K{\textup{K}}
\def\fk{\mathfrak K}
\def\KK{\textup{KK}}
\def\CT{\textup{CT}}
\def\Br{\textup{Br}}
\def\lim{{\varprojlim}}
\def\spinc{\textup{spin${}^c$}}
\newcommand{\beq}{\begin{eqnarray}}
\newcommand{\beqn}{\begin{eqnarray*}}
\newcommand{\eeq}{\end{eqnarray}}
\newcommand{\eeqn}{\end{eqnarray*}}
\newtheorem{thm}{Theorem}[section]
\newtheorem{lem}[thm]{Lemma}
\newtheorem{prop}[thm]{Proposition}
\newtheorem{cor}[thm]{Corollary}
\newtheorem{ex}{Example}
\newtheorem{defn}[thm]{Definition}
\newtheorem{rem}[thm]{Remark}
\newcommand{\comment}[1]{\noindent\fbox{\vtop{\noindent\textsf{\itshape #1}}}\\ \medskip \\}
\begin{document}

\title{Verlinde  modules and quantization}

\author{Varghese Mathai}
\address[V Mathai]{Department of Pure Mathematics, University of Adelaide, Adelaide,
SA 5005, Australia}
\email{mathai.varghese@adelaide.edu.au}

\begin{abstract}
Given a  compact simple Lie group $G$ and a primitive degree 3 twist $\eta$, 
we define a monoidal category $\eC(G, \eta)$  with a May structure given by disjoint union and fusion product. 
An object in the category $\eC(G, \eta)$ is a pair $(X, f)$, where $X$ 
is a compact $G$-manifold and $f: X\to G$ a smooth $G$-map with
respect to the conjugation action of $G$ on itself. Such an object
determines a
module, the twisted equivariant K-homology group $\K^G(X, f^*(\eta))$, for the Verlinde algebra, termed a Verlinde module, where the 
module action is induced by the $G$-action on $X$. 
In order to understand which objects in $\eC(G, \eta)$ can be quantized,
we define the closely related monoidal category $\eD(G, \eta)$ consisting of equivariant 
 twisted geometric K-cycles, 
which also has a May structure  given by disjoint union and fusion product. There is a forgetful functor $\eD(G, \eta)\map \eC(G, \eta)$,
showing that an object in $\eD(G, \eta)$ determines a Verlinde module.
Every object in the category $\eD(G, \eta)$ also
has a quantization, valued in the Verlinde algebra.
Finally, the quantization functor induces an isomorphism between the
geometric equivariant twisted K-homology ring 
$\K^G_{geo}(G, \eta)$, and the Verlinde algebra.
\end{abstract}

\thanks{{\em Acknowledgements.}
This research was supported under Australian Research Council's Discovery
Projects funding scheme (project number DP0878184).
V.M.\ is the recipient of an Australian Research Council Australian
Professorial Fellowship (project number DP0770927). V.M. thanks P. Hekmati and S. Mahanta for 
interesting discussions.}

\keywords{Verlinde algebra, Verlinde modules, quantization, twisted equivariant $K$-homology, May structure,
monoidal category, quasi-hamiltonian manifolds, group valued moment maps}

\subjclass[2010]{}

\maketitle


\section*{Introduction}
This work is partly inspired by a theorem of Freed, Hopkins and Teleman \cite{FHT1,FHT2,FHT3}, which identifies the twisted $G$-equivariant $\K$-homology group $\K^G(G, \eta)$  of a compact Lie group $G$, with the Verlinde algebra $R_\ell(G)$ of $G$ 
at a level $\ell$ determined by the twist $\eta$. Firstly, they show that $\K^G(G, \eta)$ is an algebra, with product induced by the multiplication $m : G\times G \longrightarrow G$ on the group $G$, and that their isomorphism ``explains'' the combinatorially complicated fusion product in the Verlinde algebra. Another consequence of their theorem is that the Verlinde algebra $R_\ell(G)$ has the same functorial properties as $\K^G(G, \eta)$. Recall that the importance of the Verlinde algebra is that it encodes the selection rules for the operator product expansion in certain rational conformal field theories such as the WZW-model. That is, they encode the dimensions of spaces of conformal blocks of these rational conformal field theories, i.e. dimensions of certain spaces of generalized theta functions (cf. \cite{Ver}). These dimensions and their polynomial behaviour are of fundamental importance in conformal field theory.

Another source of inspiration is the recent work by Meinrenken \cite{Mein1,Mein2} on the relation of quasi-Hamiltonian manifolds to the work of Freed, Hopkins and Teleman. To every compact quasi-Hamiltonian manifold $(M, \omega, \Phi)$ with group-valued moment map $\Phi: M \to G$ (which satisfies 
$\Phi^*(\eta)=d\omega$), Meinrenken defines the quantization $Q(M)$ to be the element of the 
Verlinde algebra $\Phi_*([M]) \in \K^G(G, \eta) \cong R_\ell(G)$, where $[M]$ denotes the equivariant fundamental class of the compact $G$-manifold $M$, which is an element in $\K^G(M, {\rm Cliff}(TM))$, since he shows that $M$ has an equivariant twisted Spinc structure, (explained later in the section). He then establishes several very interesting properties of his quantization procedure, as well as calculations of it. An approach related  to Meinrenken's, but which uses equivariant bundle gerbe modules instead, is due to Carey-Wang \cite{CW}.

We begin by proving some very general results, phrased using equivariant (operator) K-homology, which is equivariant K-homology on the category of separable $G$-$C^*$-algebras. It reduces to equivariant topological K-homology on compact topological spaces. Equivariant operator K-homology is a particular case of Kasparov's equivariant bivariant K-theory (or $\KK^G$-theory) which has a cup-cap product, making it into a powerful tool. We will utilise this cup-cap product and its properties, to prove 
our first general result, which says that if $A$ is a separable $G$-$C^*$-algebra which is also a $G$-coalgebra, then the Kasparov equivariant $K$-homology group $\KK^G(A, \CC)$ (as defined in \cite{Kas, Black}), admits a ring structure induced by the comultiplication on $A$, see Proposition
\ref{K-alg}. As a corollary,
we see that for a compact $C^*$-quantum group $A$, the K-homology group $\KK(A, \CC)$ admits a ring structure induced by the comultiplication on $A$, see Corollary \ref{q-alg}. As another corollary, we obtain a new proof of a theorem in \cite{FHT1, FHT2, FHT3} showing that the equivariant
twisted K-homology $\K^G(G, \eta)$, where $G$ acts on itself by conjugation and $\eta$ is a primitive degree 3 twist on $G$, is has 
a ring structure induced by the multiplication map $m:G\times G \map G$ on $G$. The next general result says that if  $A_1$ is a separable $G$-$C^*$-algebra which is also a $G$-coalgebra, and $A_2$ is another separable $G$-$C^*$-algebra which is also a
$G$-comodule for $A_1$,
then the abelian group $\K^G(A_2) =\KK^G(A_2, \CC)$ is a module for the algebra
$\K^G(A_1)=\KK^G(A_1, \CC)$, see Proposition \ref{K-mod}. As a corollary (see 
Corollary \ref{q-mod}) we see that  if  $A_1$ is a compact
$C^*$-quantum group and $A_2$ is a separable $C^*$-algebra which is also a comodule for $A_1$,
then the abelian group $\KK(A_2, \CC)$ is a module for the algebra $\KK(A_1, \CC)$.

We next define a category $\eC(G, \eta)$ whose objects are pairs $(X, f)$, where $X$ is a compact $G$-manifold and $f: X \to G$ is a smooth $G$-equivariant map, with respect to the conjugation action of $G$ acting on $G$. Corollary \ref{K-mod2} of Proposition \ref{K-mod} described above, establishes that the group action $a: G \times X \longrightarrow X$ determines
the module $\K^G(X, f^*(\eta))$ for the Verlinde algebra $R_\ell(G)$. We term such a module a {\em Verlinde module}, explaining part of the title of the paper. 
Moreover, any morphism in the category $\eC(G, \eta)$ determines a morphism of 
Verlinde modules over the
Verlinde algebra. 
In fact, we verify that $\eC(G, \eta)$ is a strict monoidal category
that has a {\em May structure}, which essentially says that the operations of disjoint union and fusion product are compatible, and can be 
viewed as an analogue of an algebra structure for categories.
Here the fusion product $\fusion$ is defined as $$(X_1,  f_1) \fusion (X_2,  f_2) = (X_1 \times X_2,  m\circ (f_1 \times f_2)),$$ for objects $(X_j,  f_j), \, j=1, 2$ in $\eC(G, \eta)$.
Denote the category of Verlinde modules by ${\rm VMod}(R_\ell(G))$, which is a subcategory of the category of all modules  ${\rm Mod}(R_\ell(G))$ over the Verlinde algebra.  We establish several interesting properties of ${\rm VMod}(R_\ell(G))$. More formally, we show that the lax monoidal functor 
$${\eF} : \eC(G, \eta) \longrightarrow {\rm Mod}(R_\ell(G)),$$ defined 
by ${\eF} (X, f) = \K^G(X, f^*(\eta))$, is compatible with the May
structures on both categories. These can be viewed as the central results in the paper.

Next we want to understand which objects in $\eC(G, \eta)$ can be quantized. To achieve this goal, we
define a closely related category $\eD(G, \eta)$, which objects are triples $(X, E,  f)$ where $(X, f)$ is an
object in the category $\eC(G,\eta)$ and
$E$ is a $G$-equivariant (complex) vector bundle over $X$.
In addition, we assume that $X$ has an {\em equivariant twisted Spinc structure}, that is, there is a given equivariant isomorphism,
\beq
f^*(\eK_\eta) \cong {\rm Cliff}(TX)\otimes \eK,
\eeq
where $\eK_\eta$ is the algebra bundle of compact operators on $G$ determined by $\eta$, $ {\rm Cliff}(TX)$ denotes the Clifford algebra bundle associated to the tangent bundle of $X$, and $\eK$ denotes
the algebra of compact operators on a $G$-Hilbert space.
The objects of $\eD(G, \eta)$ are an equivariant analogue of twisted geometric K-cycles in \cite{Baum, Wang}.
The key observation made here is that in this special case, this category has a richer structure than usual,
given by the May structure defined below.

$(G, {\bf 1}, \id:G\map G)$ is a final object in the category $\eD(G, \eta)$, where $\bf 1$ is the trivial line bundle over $G$. 
The morphisms of $\eD(G, \eta)$ are explicitly described in the text. In particular, a compact quasi-Hamiltonian $G$-manifold $(M, \omega, \Phi)$ determines the object $(M, {\bf 1}, \Phi)$ in $\eD(G, \eta)$, by a result in \cite{AM}. Clearly $\eD(G, \eta)$ is much larger, and it 
is closed under disjoint union $\coprod$, a dual operation, the fusion product $\fusion$ and 
also $G$-vector bundle modification, all of which will be explained in the text. Here we mention 
that the fusion product is $$(X_1, E_1, f_1) \fusion (X_2, E_2, f_2) = (X_1 \times X_2, E_1\boxtimes E_2, m\circ (f_1 \times f_2)),$$ for objects $(X_j, E_j, f_j), \, j=1, 2$ in $\eD(G, \eta)$.
We verify that $\eD(G, \eta)$ is a strict monoidal category
that has a {May structure} given by $\coprod$ and $\fusion$.

Every object $(X, E, f)$ in $\eD(G, \eta)$ has a {\em fundamental class} $[X]\in \K^G(X, {\rm Cliff}(TX))$, 
by a construction of Kasparov \cite{Kas}. The
{\em quantization} $\cQ(X, E, f) \in R_\ell(G)$ is defined as $\cQ(X, E, f) = f_*([E] \cap [X])$, which in the special case 
when $X$ itself is an equivariant Spinc manifold (i.e. $W_3^G(X) = W_3(X_G)=0$), 
reduces to $\cQ(X, E, f)= [f_*(\eth_X\otimes E)]$, where $\eth_X$ is the equivariant Spinc Dirac operator on $X$, and
$\eth_X\otimes E$ denotes the coupled operator. We will show that the quantization then determines a monoidal  functor, 
$$\cQ: \eD(G, \eta) \longrightarrow \K^G(G, \eta) \cong R_\ell(G),$$ 
respecting the May structures on both categories. In particular, an object $(X,  f)$ in $\eC(G, \eta)$ has a quantization if $X$ has an equivariant twisted Spinc structure, in which case $(X, {\bf 1}, f)$ defines an object in $\eD(G,\eta)$.

Consider the equivalence relation $\sim$ on objects in $\eD(G, \eta)$ generated isomorphism and by 
the following three elementary moves explained in the text,
\begin{enumerate}
\item direct sum-disjoint union;
\item $G$-bordism over $(G, \eta)$;
\item $G$-vector bundle modification.
\end{enumerate}
The geometric equivariant twisted K-homology group $\K^G_{geo}(G, \eta) $ is defined to be the abelian group which 
is the quotient $\eD(G, \eta)/\sim$, with addition induced by disjoint union-direct sum. Our first observation is that $\K^G_{geo}(G, \eta) $ has a ring structure induced by the fusion product $\fusion$ on $\eD(G, \eta)$.

The map induced by quantization is an isomorphism of rings,
\beq
\cQ :\K^G_{geo}(G, \eta)  \stackrel{\cong}\longrightarrow \K^G(G, \eta)\cong R_\ell(G).
\eeq
This is a special case of a more general theorem which will be proved elsewhere, which uses a hybrid of 
techniques in  \cite{Baum, Wang} and \cite{Baum2}. 

An impact of this 
result is that the May structure on the category $\eD(G, \eta)$ induces the algebra structure on $\K^G(G, \eta)$,
which by \cite{FHT1, FHT2, FHT3} is just the Verlinde algebra $R_\ell(G)$.

\section{The category $\eC(G, \eta)$}

Let $G$ be a connected, simply connected and simple compact Lie group with the multiplication map $m:G\times G\map G$. Let $G$ act on $G$ by the conjugation. Then it is known that every cohomology class $[\eta]\in\H^3_G(G, \ZZ)$ has a primitive representative, i.e., there is $\eta\in\Omega^3_\ZZ(G)$ a closed $3$-form on $G$ with integral periods, such that $m^*(\eta) = p_1^*(\eta) + p_2^*(\eta)$. 

 Let $\eC(G,\eta)$ denote the category, whose objects are pairs 
$(X, f)$, where $X$ is a compact $G$-manifold and $f:X\map G$ is a $G$-equivariant map
with respect to the conjugation action of $G$ acting on $G$.

A morphism $\theta$ between the objects $(X_1, f_1)$ and $(X_2, f_2)$ 
in the category $\eC(G, \eta)$ is a (pointed) smooth $G$-map 
$\theta:X_1\map X_2$ compatible with the structure maps, i.e., the following diagram commutes
\beq
\xymatrix{
X_1 \ar[ddr]_{f_1} \ar[rr]^{\theta} && X_2
\ar[ddl]^{f_2} \\ &&\\
& G   }  
\eeq
In particular, the objects $(X_1, f_1)$ and $(X_2, f_2)$ 
in the category $\eC(G, \eta)$ are said to be {\em isomorphic} if there is an isomorphism $\theta:  
(X_1, f_1) \map(X_2, f_2)$ in $\eC(G, \eta)$.
Observe that $(G,\id:G\map G)$ is a final object in this category.

The chosen $\eta$ determines a canonical class in $\H^3(X,\ZZ)$ for any object $(X,f)$ given by the pullback along the structure map $f^*(\eta)$. We list below several interesting objects of the category 
$\eC(G,\eta)$. More examples will be given in the appendix.

\begin{ex} (Trivial action) 

Let $X$ be a $G$-space with the trivial $G$-action, 
and $f: X \to G$ be a continuous map
such that the image of $f$ lies in the centre of $G$. Then $f$ is equivariant with respect to the
adjoint action of $G$ on $G$, so that $(X,  f)  \in \eC(G,\eta)$ for any $G$-vector bundle $E$ over $X$.

\end{ex}

\begin{ex} (Free action)  

Let $X$ be a free $G$-space, 
and $f: X/G \to G/G$ be a continuous map to the space of conjugacy classes $G/G$ in $G$.
Then $f$ lifts to an equivariant map $\tilde f\colon X \to G$ with respect to the
adjoint action of $G$ on $G$, so that $(X,  f)  \in \eC(G,\eta)$.

\end{ex}

\begin{ex}(Hamiltonian $G$-spaces)

Let $(X, \omega)$ be a Hamiltonian $G$-space with moment map $\mu: X \longrightarrow 
{\mathfrak  g}^*$. If $G$ has an Ad-invariant metric, then ${\mathfrak  g}^*\cong {\mathfrak  g}$,
which will be assumed here.

Then $(X, f) \in \eC(G, \eta)$, where $f = \exp(\mu)$, where $\exp: {\mathfrak  g} \longrightarrow G$
denotes the exponential map of the Lie group $G$.

\end{ex}
 
\begin{ex} (Inverse)

Let $(X,  f) \in \eC(G, \eta)$. Then  $(X,  1/f) \in \eC(G, \eta)$.

\end{ex}

\begin{ex}(Fusion product) \label{ex:fusion}

Let $(X_1,  f_1), (X_2,  f_2)$ be objects in $\eC(G, \eta)$. Then  $(X_1 \times X_2,  f_1 \times f_2) \in  
\eC(G \times G, p_1^*(\eta) + p_2^*(\eta) )$, and $(X_1 \times X_2, m \circ (f_1 \times f_2)) \in \eC(G, \eta)$
is called the {\em fusion of $(X_1, f_1)$ and $(X_2, f_2)$} and is denoted by $(X_1, f_1) \circledast (X_2, f_2) $.
Here we use the fact that $m^*(\eta) =  p_1^*(\eta) + p_2^*(\eta) $.

\end{ex}

\begin{ex}(Disjoint union) \label{ex:disjoint}

Let $(X_1,  f_1), (X_2,  f_2)$ be objects in $\eC(G, \eta)$. The disjoint union
$(X_1 \coprod X_2,  f_1 \coprod f_2) \in  
\eC(G, \eta)$.

\end{ex}

\subsection{Morphisms in the category $\eC(G, \eta)$ and more examples}

A morphism $\phi : (X_1, f_1) \longrightarrow  (X_2, f_2)$, where $(X_i, f_i) \in \eC(G_i, \eta_i), \, i=1,2$, is given by a group homomorphism $\phi_G : G_1 \longrightarrow G_2$ such that $\phi_G^*(\eta_2) = \eta_1$ and a smooth map $\phi_X : X_1 \longrightarrow X_2$ satisfying
$$
f_2 \circ \phi_X = \phi_G\circ f_1, \qquad \phi_X(g.x) = \phi_G(g).\phi_X(x)
$$
for all $g\in G_1$ and $x\in X_1$. In particular, when $G_1=G_2=G$  and $\eta_1=\eta_2=\eta$, we obtain the morphisms in the category $\eC(G, \eta)$. 

\begin{ex}(Fixed point set)

Suppose that a compact Lie group $K$ acts by automorphisms on $(X,  f) \in \eC(G, \eta)$. Then the fixed-point set 
$X^K$ and the restriction of $f$, $f^K: X^K \to G^K$ determines an object 
$(X^K,  f^K) \in \eC(G^K,  \eta^K)$ and is 
such that the inclusion $\iota : X^K \hookrightarrow X$ is a morphism.

\end{ex}

\begin{ex} (Fixed point set - special case)

As a special case of the example above, suppose that 
$(X,  f)$ is an object in $\eC(G, \eta)$ and $g \in G$. Then the components of the fixed point set 
of $g$, $(X^g,  f^g) \in \eC(G^g, \eta^g)$ is 
such that the inclusion $\iota : X^g \hookrightarrow X$ is a morphism.

\end{ex}

\begin{ex} (Finite covers)

Suppose that 
$(X,f)$ is an object in $\eC(G, \eta)$ and $p: \widetilde G \to G$ a finite cover of $G$.
Consider the fibre product $\widetilde X = \{ (x, \tilde g) \in X \times  \widetilde G\, \big| \, f(x)=p(\tilde g)\}$
and $\widetilde f = p_1^*f$, where $p_1:\widetilde X \to X$ is the projection onto the first factor.
Then $(\widetilde X, \widetilde f)$ is an object in  $\eC(G, \eta)$ such that the quotient map $\widetilde X\mapsto X$
is a morphism in $\eC(G, \eta)$. 
\end{ex}

\begin{ex} (Quasi-Hamiltonian $G$-spaces and Hamiltonian $LG$-spaces)

 We recall the definition from \cite{AMM} of  the
   definition of a group-valued moment map for a
   quasi-Hamiltonian $G$-space.
A {\bf quasi-Hamiltonian $G$-manifold} is a $G$-manifold $M$ with
     an invariant 2-form $\omega \in \Omega^2(M)^G$ and an equivariant
     map $\Phi \in C^\infty(M, G)^G$ such that
     \begin{enumerate}
     \item $d\omega = \Phi^*(\eta)$;
     \item The map $\Phi$ satisfies $\iota(v_\xi) = \displaystyle{\frac k2} \Phi^*\langle
     \theta + \bar \theta, \xi \rangle,$ where $v_\xi$ is the vector
     field on $M$ generated by $\xi\in {\frak g}$, $\theta$ is the left invariant 
     Cartan-Maurer form, $\bar\theta$ is the right invariant Cartan-Maurer form;
     \item At each $x\in M$, the kernel of $\omega_x$ is given by
      \[
      \ker (\omega_x) = \{ v_\xi| \ \xi \in ker(Ad_{\Phi(x)+1})\}.
      \]
    The map $\Phi$ is called the group valued moment map of the
    quasi-Hamiltonian $G$-manifold $M$. It is proved  in \cite{AM} that 
    $(M,  \Phi)$ is an element in $\eC(G, \eta)$.
    \end{enumerate}

    Basic examples of quasi-Hamiltonian $G$-spaces are provided by 
    products of conjugacy classes
    ${\mathcal C }\subset G$ as in \cite{AMM}. More generally, a
    bijective correspondence between Hamiltonian loop group manifolds
    with proper moment map and quasi-Hamiltonian $G$-manifold is established
    in that paper. They constitute a large number of objects in $\eC(G, \eta)$.

\end{ex}

\begin{ex}(Disjoint union)

The disjoint union of a pair of objects $(X_i,  f_i), \, i=1,2$ in $\eC(G, \eta)$ is defined 
as the disjoint union,
$$
(X_1,  f_1) \oplus (X_2,  f_2) = (X_1 \coprod X_2,  f_1\coprod f_2)
$$
\end{ex}

\begin{ex}(Extension by maps)

Let $(X,  f)$ be an object in $\eC(G, \eta)$ and $h: Y\longrightarrow X$ be a $G$-map. 
Then $(Y, f\circ h) \in \eC(G, \eta)$.

For example, let $W$ be an equivariant vector bundle over $X$. 
Then $(\widehat X, f\circ \pi)$ is again an object in $\eC(G, \eta)$. 
Here $\widehat X$ denotes the unit sphere
bundle $S((X\times \RR)\oplus W)$, $\pi: \widehat X \longrightarrow X$ is the projection.

\end{ex}

\subsection{The fusion monoidal structure on $\eC(G, \eta)$}
Recall that a monoidal category is a category in which associated to each 
pair of objects $A$ and $B$ there exists 
a product object $A\otimes B$, and there is an identity object $\idob$, 
such that $\idob\otimes A \cong A \cong A\otimes \idob$,
together with associator isomorphisms
$$
\Phi = \Phi_{A,B,C}: A\otimes (B\otimes C) \to (A\otimes B)\otimes C
$$
for any three objects $A$, $B$ and $C$, satisfying the    {\em  pentagonal identity}:
{\small
\begin{equation*}
\xymatrix@=1pc@C-40pt{
&& A\otimes (B\otimes(C\otimes D)) \ar[ddll]_{\idob_A\otimes \Phi_{B,C,D}}  
\ar[ddrr]^{\Phi_{A,B,C\otimes D}} && \\ &&&& \\
A\otimes ((B\otimes C)\otimes D) \ar[dddr]^{\Phi_{A,B\otimes C,D}} && &&
(A\otimes B)\otimes (C\otimes D) \ar[dddl]_{\Phi_{A\otimes B,C,D}}  \\ &&&& \\ &&&& \\
& (A\otimes (B\otimes C))\otimes D \ar[rr]_{\Phi_{A,B,C} \otimes \idob_D} && 
((A\otimes B)\otimes C)\otimes D & }
\end{equation*}}
with each arrow the appropriate map $\Phi$, and the triangle relation:
\begin{equation*}
\xymatrix{
A\otimes(\idob\otimes B) \ar[ddr]_{\cong} \ar[rr]_{\Phi_{A,\idob,B}} && (A\otimes \idob)\otimes B 
\ar[ddl]^{\cong} \\ &&\\
& A\otimes B   }  
\end{equation*}
{\em MacLane's coherence theorem} ensures that these conditions are sufficient to 
guarantee consistency of all other rebracketings.

Recall the fusion product in $\eC(G,\eta)$ from Example \ref{ex:fusion}. Given $(X_1,  f_1)$ and $(X_2, f_2)$ the following diagram describes the fusion product $$(X_1, f_1)\fusion (X_2, f_2)=(X_1\times X_2, m\circ(f_1\times f_2)).$$

\beqn
\xymatrix{
X_1\times X_2 \ar@/_/[ddr]_{p_{X_1}} \ar@/^/[drr]^{p_{X_2}}
\ar@{->}[dr]|-{f_1\times f_2} \\
& G \times G \ar[dr]^m
& X_2 \ar[d]^{f_2} \\
& X_1 \ar[r]_{f_1} & G 
}
\eeqn 

\begin{prop}
 $\eC(G, \eta)$ is a strict monoidal category with product given by the fusion product $\circledast$ (as in Example \ref{ex:fusion}) and identity element $(e, \I)$, where $e\in G$ is the identity element and $\I: e\hookrightarrow G$ is the inclusion map,
 which is equivariant under the adjoint action of $G$ on itself and the trivial action of $G$ on $e$.
\end{prop}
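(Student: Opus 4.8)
The plan is to verify the monoidal category axioms directly for the fusion product $\circledast$, exploiting the primitivity condition $m^*(\eta) = p_1^*(\eta) + p_2^*(\eta)$ and the associativity and unit properties of the group multiplication $m: G\times G\to G$. First I would check that $\circledast$ is well-defined on objects, which is exactly Example \ref{ex:fusion}: given $(X_1,f_1)$ and $(X_2,f_2)$, the map $m\circ(f_1\times f_2): X_1\times X_2\to G$ is $G$-equivariant for the diagonal $G$-action (since $m$ intertwines the diagonal conjugation action on $G\times G$ with the conjugation action on $G$), and the pullback of $\eta$ is controlled by primitivity. Next I would check functoriality: given morphisms $\theta_i: (X_i,f_i)\to(X_i',f_i')$, the product map $\theta_1\times\theta_2$ is $G$-equivariant and satisfies $m\circ(f_1'\times f_2')\circ(\theta_1\times\theta_2) = m\circ(f_1\times f_2)$ because each square commutes; this makes $\circledast$ a bifunctor.

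The heart of the argument is strictness. For associativity, I would observe that $(X_1\times X_2)\times X_3$ and $X_1\times(X_2\times X_3)$ are literally the same $G$-manifold (the Cartesian product is strictly associative on the nose, or at worst one fixes the standard identification), and the two structure maps agree because
\[
m\circ(m\times\id)\circ(f_1\times f_2\times f_3) \;=\; m\circ(\id\times m)\circ(f_1\times f_2\times f_3),
\]
which is just the associativity of the group law $m$ applied pointwise. Hence the associator $\Phi_{A,B,C}$ can be taken to be the identity, and the pentagon is trivially satisfied. For the unit, with $\idob = (e,\I)$ where $e$ is a one-point space, I would note that $\{e\}\times X$ is canonically identified with $X$ as a $G$-manifold, and under this identification $m\circ(\I\times f) = m(e,-)\circ f = f$ since $e$ is the group identity; symmetrically on the other side. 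So the left and right unitors are identities and the triangle relation holds trivially.

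The main (and essentially only) obstacle is bookkeeping about what ``strict'' means in the presence of the set-theoretic non-strictness of the Cartesian product: $(A\times B)\times C$ and $A\times(B\times C)$ are isomorphic but not equal as sets, and $\{e\}\times X\ne X$ on the nose. The standard resolution is either to work with a chosen skeleton / fixed bracketing convention so that these become genuine equalities, or to accept the canonical ``rearrangement'' isomorphisms as the structure maps and check they are natural and coherent --- which is immediate since they are the evident shuffle maps and the group axioms make the $f$-components match. I would phrase the proof in the latter way, emphasizing that all coherence diagrams commute because they reduce, on the $X$-factors, to obvious manipulations of Cartesian products and, on the $G$-valued structure maps, to the associativity and unit axioms of $m$ together with the equivariance already established in Examples \ref{ex:fusion} and \ref{ex:disjoint}. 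Finally I would remark that the equivariance of $\I: \{e\}\hookrightarrow G$ under adjoint/trivial actions is clear since $e$ is fixed by conjugation, completing the verification that $(e,\I)$ is a strict unit.
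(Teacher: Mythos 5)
Your proposal is correct and follows essentially the same route as the paper: both verify the unit axiom for $(e,\I)$ directly and derive associativity of $\circledast$ from the associativity of the group multiplication $m$, concluding via MacLane coherence that the structure is (strict) monoidal. The only difference is that you explicitly flag the set-theoretic non-strictness of the Cartesian product, a point the paper elides by passing from ``canonically isomorphic'' to $\Phi_{1,2,3}=\id$ without comment, so if anything your version is slightly more careful on that bookkeeping.
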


\begin{proof} We need to verify that  $\eC(G, \eta)$ satisfies the axioms of a monoidal category. 

Let $(X,  f) \in \eC(G, \eta)$ and consider the fusion product  $$(X, f) \circledast (e,  \I)
=  (X \times e,  m \circ (f \times \I))$$ which is clearly isomorphic to $(X, f)$. Similarly, 
$(e, \I) \circledast (X,  f)$ is also isomorphic to $(X, f)$. Therefore  $(e,  \I) \in \eC(G, \eta)$
serves as an identity in $\eC(G, \eta)$.

Next, let $(X_i, f_i) \in \eC(G, \eta), \, i=1,2,3$, and 
$$
\Phi_{1,2,3}: (X_1, f_1) \fusion ((X_2, f_2) \fusion  (X_3, f_3) ) \to 
((X_1,  f_1) \fusion (X_2,  f_2)) \circledast  (X_3,  f_3).
$$
The LHS  is canonically isomorphic to 
$$
(X_1 \times X_2 \times X_3, m\circ (f_1 \times m\circ (f_2 \times f_3))),
$$
whereas the RHS is canonically isomorphic to 
$$
(X_1 \times X_2 \times X_3, m\circ(m \circ (f_1 \times f_2) \times f_3)).
$$
The equality $m\circ (f_1 \times m\circ (f_2 \times f_3)) = m\circ(m \circ (f_1 \times f_2) \times f_3)$ 
follows from the associativity of the multiplication $m$ on $G$. In particular, $\Phi_{1,2,3}=\id$
and $\eC(G, \eta)$ is a strict category. In particular, the pentagonal identity is automatically 
satisfied, and by  MacLane's coherence theorem, $\eC(G, \eta)$ is a strict monoidal category.

\end{proof}

 \subsection{May structure}
Motivated by the study of infinite loop spaces and algebraic $\K$-theory J. P. May introduced the machinery of (bi)permutative categories \cite{May1,May2}. These are, roughly, categories equipped with two monoidal structures  satisfying certain compatibility conditions. We have seen before the fusion monoidal structure on $\eC(G,\eta)$. 

\begin{defn}
A May structure on a category $\cC$ is a pair of monoidal structures $\oplus$, $\otimes$ on $\cC$, such that for all $X,Y,Z\in\cC$ there is a natural (distributivity) isomorphism $X\otimes(Y\oplus Z)\cong (X\otimes Y) \oplus (X\otimes Z)$. We call a category equipped with a May structure as a May category.

A May functor $F:\cC\functor\cD$ between May categories is one which is lax monoidal with respect to both $\oplus$ and $\otimes$, such that, in addition, the following diagram commutes:

\beqn
\xymatrix{
F(X)\otimes F(Y) \oplus F(X)\otimes F(Z)\ar[r]\ar[d] & F(X\otimes Y)\oplus F(X\otimes Z) \ar[r] & F((X\otimes Y) \oplus (X\otimes Z))\ar[d] \\
F(X)\otimes(F(Y)\oplus F(Z))\ar[r] & F(X)\otimes F(Y\oplus Z)\ar[r] & F(X\otimes(Y\oplus Z))
}
\eeqn Here the vertical maps are the distributivity isomorphisms and the horizontal maps are furnished by the lax monoidal structures.
\end{defn}

\begin{ex}
The category $\eC(G,\eta)$ endowed with $\oplus=\coprod$ and $\otimes=\fusion$ is a May category. We have already seen that disjoint union and fusion product are monoidal structures on $\eC(G,\eta)$. The distributivity condition is readily verified.
\end{ex}

\section{Continuous trace $C^*$-algebras on $G$ and $G$-coalgebra structure}

Let $G$ be a compact Lie group and let $X$ be a compact $G$-space. Let $\eK$ denote the $C^*$-algebra of compact operators on a separable $G$-Hilbert space. A $G$-equivariant {\em Dixmier--Douady (DD) bundle} on $X$ is a $G$-equivariant algebra bundle on $X$, whose fibres are $\eK$ with the projective unitary group ${\rm PU}$ as the structure group. Given any $G$-equivariant principal ${\rm PU}$ bundle $P$ over $X$, one can construct an a DD-bundle over $X$ as an associated bundle 
$\eK_P = P\times_{{\rm PU}}\eK$. The equivariant Dixmier-Douady class of a $G$-equivariant { Dixmier--Douady (DD) bundle} on $X$ 
is a cohomology class in $\H^3_G(X, \ZZ)$. A recent theorem of Atiyah-Segal \cite{AS} says that every equivariant cohomology class in $\H^3_G(X, \ZZ)$ determines a $G$-equivariant ${\rm PU}$-bundle $P$ over $X$ up to 
$G$-equivariant isomorphism, which determines the 
$G$-equivariant DD-bundle $\eK_P$, thus establishing the converse. In fact, the $G$-equivariant isomorphism classes of 
DD-bundles $\eK_P$ over $X$ form an abelian group under direct sum, called the equivariant Brauer group, ${\Br}_G(X)$. Since the equivariant Dixmer-Douady class of a direct sum of $G$-equivariant DD-bundles is equal to the sum of the  equivariant Dixmer-Douady classes of the $G$-equivariant DD-bundles, there is a natural isomorphism of groups, ${\Br}_G(X)\cong \H^3_G(X, \ZZ)$.

Consider a category whose objects are pairs $(X,P)$, where $X$ is a compact $G$-space and $P$ is a fixed choice of a $G$-equivariant principal ${\rm PU}$-bundle on $X$. A morphism $(X,P)\map (X',P')$ in this category is a continuous $G$-map $f:X\map X'$, such that $f^*(P')= P$. 
Let $G,X$ be as above and let $P$ be a $G$-equivariant principal ${\rm PU}$-bundle on $X$. Let $\CT(X,P)$ denote the stable continuous trace $C^*$-algebra of all continuous sections of the associated equivariant DD bundle $\eK_P$ over $X$, that vanish at infinity. The induced  $G$-action on $\CT(X,P)$ makes it into a $G$-$C^*$-algebra, which enables us to construct the crossed product $C^*$-algebras $\CT(X,P)\rtimes G$. The association $(X,P)\mapsto \CT(X,P)\rtimes G$ is functorial with respect to the above-mentioned morphisms of pairs. Furthermore, if $(X,P)$ and $(X,P')$ are two pairs, such that the Dixmier--Douady invariants of $P$ and $P'$ determine the same class in $\H^3_G(X,\ZZ)$, then the equivariant Dixmier--Douady Theorem says that the $C^*$-algebras $\CT(X,P)\rtimes G$ and $\CT(X,P')\rtimes G$ are stably isomorphic \cite{AS}. 

Given a compact simple Lie group $G$ with multiplication map $m:G\times G\map G$, a $G$-equivariant DD-bundle $\eK_P$ on $G$ is said to be {\em primitive} if $m^*(\eK_P) \cong p_1^*(\eK_P)\otimes p_2^*(\eK_P)$ as $G$-equivariant bundles on $G\times G$. The equivariant Dixmier-Douady class $\eta $ of a primitive $G$-equivariant DD-bundle $\eK_P$ satisfies the corresponding primitivity property, $m^*(\eta)= p_1^*(\eta)+p_2^*(\eta)$. There are many natural ways to construct primitive $G$-equivariant DD-bundle $\eK_P$ on $G$. One of these uses positive energy representations \cite{AS, FHT1, FHT2, FHT3} and another more constructive method is in \cite{Mein2}. In this case, we denote the equivariant continuous trace algebra by $\CT(G, \eta)$. If $(X, f)$ is an object in the category $\eC(G, \eta)$, then we denote the corresponding  equivariant continuous trace algebra by $\CT(X, f^*(\eta))$.

We next prove some fundamental facts about the equivariant continuous trace algebras $\CT(G, \eta)$ and $\CT(X, f^*(\eta))$.

\begin{lem}\label{lem:coalgstructonCT}
Let $G$ be a compact group, and $\eta$ be a primitive DD-bundle on $G$. Then the $C^*$-algebras $\CT(G, \eta)$ and $\CT(G, \eta)\rtimes G$ both carry a natural coalgebra structure induced by the multiplication on $G$.
\end{lem}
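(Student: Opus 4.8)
The plan is to construct the comultiplication on $\CT(G,\eta)$ directly from the primitivity isomorphism for the DD-bundle, and then push this through to the crossed product. First I would recall that, by definition, $\CT(G,\eta) = C_0(G,\eK_P)$, the continuous sections vanishing at infinity of the equivariant DD-bundle $\eK_P$ associated to a primitive principal $\mathrm{PU}$-bundle $P$ on $G$. The primitivity hypothesis gives a $G$-equivariant isomorphism of algebra bundles $m^*(\eK_P) \cong p_1^*(\eK_P)\otimes p_2^*(\eK_P)$ on $G\times G$. Taking sections and using the natural identification of sections of an external-tensor-product bundle, this yields a $*$-homomorphism
\begin{equation*}
\Delta\colon \CT(G,\eta) \functor \CT(G\times G, m^*\eta) \cong \CT(G,\eta)\hat\otimes \CT(G,\eta),
\end{equation*}
where the first map is pullback of sections along $m$ and the second is induced by the primitivity isomorphism together with $C_0(G\times G,\,p_1^*\eK_P\otimes p_2^*\eK_P)\cong C_0(G,\eK_P)\hat\otimes C_0(G,\eK_P)$. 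One must check this is well defined into the spatial tensor product (a nuclearity remark, since $\eK$-valued continuous-trace algebras are nuclear) and that $\Delta$ is $G$-equivariant, which follows because all three bundle maps $m$, $p_1$, $p_2$ are $G$-equivariant for the conjugation action and the primitivity isomorphism is chosen $G$-equivariantly.

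Next I would verify coassociativity, $(\Delta\hat\otimes\id)\circ\Delta = (\id\hat\otimes\Delta)\circ\Delta$. This reduces to associativity of $m$ on $G$ together with a compatibility (a cocycle condition) among the primitivity isomorphisms over $G\times G\times G$: the two ways of identifying $(m\circ(m\times\id))^*\eK_P\cong(m\circ(\id\times m))^*\eK_P$ with $p_1^*\eK_P\otimes p_2^*\eK_P\otimes p_3^*\eK_P$ must agree. Since $m\circ(m\times\id)=m\circ(\id\times m)$ on the nose, and the equivariant Dixmier--Douady theorem (Atiyah--Segal \cite{AS}) guarantees the relevant bundle is determined up to equivariant isomorphism by its class — which does satisfy the analogous additive cocycle identity $p_1^*\eta+p_2^*\eta+p_3^*\eta$ — one can arrange the primitivity isomorphism to be coherent; alternatively one simply notes that at the level of the associated stable continuous-trace algebras the two resulting comultiplications are conjugate, which suffices for a ``natural coalgebra structure'' in the sense needed for Proposition \ref{K-alg}. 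A counit is not asserted in the statement, so I would not belabor it (the inclusion $e\hookrightarrow G$ would give one if wanted).

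For the crossed product $\CT(G,\eta)\rtimes G$, the plan is to apply the crossed-product functor to the equivariant $*$-homomorphism $\Delta$. Because $\Delta\colon \CT(G,\eta)\to \CT(G,\eta)\hat\otimes\CT(G,\eta)$ is $G$-equivariant for the conjugation action, functoriality of $\rtimes G$ produces
\begin{equation*}
\Delta\rtimes G\colon \CT(G,\eta)\rtimes G \functor \bigl(\CT(G,\eta)\hat\otimes\CT(G,\eta)\bigr)\rtimes G,
\end{equation*}
and it remains to identify the target with $\bigl(\CT(G,\eta)\rtimes G\bigr)\hat\otimes\bigl(\CT(G,\eta)\rtimes G\bigr)$ — or at least to receive a canonical map from the latter — using the fact that the $G$-action on the tensor product is diagonal and that $(A\hat\otimes B)\rtimes(G\times G)\cong (A\rtimes G)\hat\otimes(B\rtimes G)$, composed with restriction along the diagonal $G\hookrightarrow G\times G$. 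Coassociativity of the crossed-product comultiplication then follows from coassociativity of $\Delta$ by applying the (exact, monoidal-compatible) crossed-product functor.

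The main obstacle I anticipate is the coherence/strictness issue in the middle step: the primitivity isomorphism of DD-bundles is only specified up to homotopy (equivalently, up to an equivariant automorphism of $\eK_P$, i.e. a $\mathrm{PU}$-valued function), so the comultiplication $\Delta$ is canonical only up to inner perturbation, and genuine coassociativity on the nose requires pinning down a coherent family of such isomorphisms over $G$, $G\times G$, $G\times G\times G$. I would handle this either by invoking a specific construction of the primitive bundle (e.g. the positive-energy-representation model of \cite{AS,FHT1} or Meinrenken's construction \cite{Mein2}, where the isomorphisms are explicit and visibly coherent), or by working up to Morita equivalence throughout, which is all that is needed downstream since the ring and module structures in Propositions \ref{K-alg} and \ref{K-mod} only depend on the $\KK$-class of $\Delta$.
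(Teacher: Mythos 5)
Your proposal is correct and follows essentially the same route as the paper: pull back sections along $m$, use primitivity to identify $\CT(G\times G, m^*(\eta))$ with $\CT(G,\eta)\otimes\CT(G,\eta)$, deduce coassociativity from associativity of $m$ and of the tensor product, and pass to the crossed product via equivariance of $m$. The only difference is that you explicitly flag two points the paper treats as automatic --- the coherence of the primitivity isomorphisms over $G\times G\times G$ needed for coassociativity on the nose, and the identification of $\bigl(\CT(G,\eta)\otimes\CT(G,\eta)\bigr)\rtimes G$ (diagonal action) with the tensor product of the crossed products --- and your proposed resolutions (an explicitly coherent model of the bundle, or working only up to the $\KK$-class, which is all that Propositions \ref{K-alg} and \ref{K-mod} require) are appropriate.
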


\begin{proof}
The multiplication map $m:G\times G\map G$ induces a $*$-homomorphism 
$$m^*:\CT(G, \eta)\map \CT(G\times G, m^*(\eta)) \cong \CT(G\times G, p_1^*(\eta) + p_2^*(\eta))\cong\CT(G, \eta)\otimes\CT(G, \eta).$$
Since the multiplication map $m$ and the $C^*$-tensor product $\otimes$ are associative, it follows that the induced 
$*$-homomorphism $m^*$ is coassociative, that is, the diagram below commutes,
\begin{equation*} \label{cup-cap-associativity}
\xymatrix @=8pc @ur { \CT(G, \eta)\otimes \CT(G, \eta) \ar[d]_{1\otimes m^*} & 
\CT(G, \eta) \ar[d]_{m^*} 
\ar[l]^{m^*} \\  \CT(G, \eta)\otimes \CT(G, \eta)\otimes \CT(G, \eta) & 
\CT(G, \eta)\otimes \CT(G, \eta)
\ar[l]^{m^* \otimes 1}}
\end{equation*}

Moreover, the multiplication map $m\colon G\times G\map G$ is $G$-equivariant under the adjoint action of $G$. Therefore, it induces a $*$-homomorphism of $G$-$C^*$-algebras and in turn that of the crossed product algebras, $$m^*:\CT(G, \eta)\rtimes G\map (\CT(G, \eta)\rtimes G)\otimes(\CT(G, \eta)\rtimes G).$$ Coassociativity of $m^*$ follows by similar reasoning as above.
\end{proof} 

\noindent
Using analogous arguments one easily proves the following result:

\begin{lem} \label{comod}
Let $(X, f)$ be an object in the category $\eC(G, \eta)$. Then $\CT(X,f^*(\eta))$ (resp. $\CT(X,f^*(\eta))\rtimes G$) is a comodule over the coalgebra $\CT(G,\eta)$ (resp. $\CT(G,\eta)\rtimes G$), which is induced by the group action map.
\end{lem}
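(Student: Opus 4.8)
The plan is to mirror the proof of Lemma~\ref{lem:coalgstructonCT}, replacing the multiplication map $m\colon G\times G\to G$ by the action map $a\colon G\times X\to X$ throughout. First I would record that the action map $a$ satisfies $a^*(f^*\eta)\cong p_G^*(\eta)+p_X^*(f^*\eta)$ in $\H^3_G(G\times X,\ZZ)$: this follows because $f\circ a = m\circ(\mathrm{id}_G\times f)$ (equivariance of $f$), so $a^*f^*\eta = (\mathrm{id}\times f)^*m^*\eta = (\mathrm{id}\times f)^*(p_1^*\eta+p_2^*\eta) = p_G^*\eta + p_X^*f^*\eta$, using primitivity of $\eta$. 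At the level of DD-bundles this gives a $G$-equivariant isomorphism $a^*(\eK_{f^*\eta})\cong p_G^*(\eK_\eta)\otimes p_X^*(\eK_{f^*\eta})$, hence a $*$-homomorphism on sections
\[
a^*\colon \CT(X,f^*\eta)\longrightarrow \CT(G\times X,\,a^*f^*\eta)\cong \CT(G,\eta)\otimes\CT(X,f^*\eta).
\]

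The coaction axioms are then forced by the group axioms for $a$. Coassociativity of the coaction, i.e.\ commutativity of the square relating $(m^*\otimes 1)\circ a^*$ and $(1\otimes a^*)\circ a^*$ into $\CT(G,\eta)\otimes\CT(G,\eta)\otimes\CT(X,f^*\eta)$, comes from the associativity identity $a\circ(\mathrm{id}_G\times a) = a\circ(m\times \mathrm{id}_X)$ for a group action, together with associativity of the $C^*$-tensor product (so the identifications of the twists are coherent, exactly as in the displayed diagram of Lemma~\ref{lem:coalgstructonCT}). The counit axiom follows from $a(e,-)=\mathrm{id}_X$, which makes the counit of $\CT(G,\eta)$ (evaluation at $e$, or rather the canonical character coming from the fibre $\eK_\eta|_e\cong\eK$) split $a^*$ appropriately. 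For the crossed-product statement one observes that $a$ is $G$-equivariant for the diagonal/conjugation actions, so $a^*$ descends to a $*$-homomorphism $\CT(X,f^*\eta)\rtimes G\to (\CT(G,\eta)\rtimes G)\otimes(\CT(X,f^*\eta)\rtimes G)$, and coassociativity is inherited by the same functoriality argument used at the end of Lemma~\ref{lem:coalgstructonCT}.

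The one point requiring genuine care—the main obstacle—is the bookkeeping of the equivariant Dixmier--Douady isomorphisms: the isomorphism $a^*(\eK_{f^*\eta})\cong p_G^*(\eK_\eta)\otimes p_X^*(\eK_{f^*\eta})$ is only canonical up to the choices inherent in the equivariant Dixmier--Douady theorem of Atiyah--Segal \cite{AS}, so one must check that these choices can be made compatibly so that the coassociativity pentagon commutes \emph{on the nose} (or at least up to the stable isomorphisms that are invisible in $\KK^G$). This is handled exactly as in Lemma~\ref{lem:coalgstructonCT} by fixing the primitive bundle $\eK_\eta$ on $G$ once and for all and pulling everything back from it; the required coherence is then a formal consequence of the coherence already established for $m$. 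Hence, as the statement says, the argument is ``analogous'' and I would present it as such, spelling out only the substitution $m\rightsquigarrow a$ and the twist computation $a^*f^*\eta = p_G^*\eta + p_X^*f^*\eta$.
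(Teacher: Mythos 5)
Your overall route is the same as the paper's: replace $m$ by the action map $a$, obtain the coaction from the identification $\CT(G\times X, a^*f^*(\eta))\cong\CT(G,\eta)\otimes\CT(X,f^*(\eta))$, deduce the comodule axiom from the action identity $a\circ(\id_G\times a)=a\circ(m\times\id_X)$, and pass to crossed products using equivariance of $a$. Your formulation of the comodule axiom (comparing $(m^*\otimes 1)\circ a^*$ with $(1\otimes a^*)\circ a^*$) is in fact more careful than the paper's displayed diagram, and your remarks on the counit and on the coherence of the Dixmier--Douady identifications go beyond what the paper records.

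There is, however, a genuine error in the one step where you supply a derivation that the paper merely asserts. You justify $a^*(f^*\eta)=p_1^*(\eta)+p_2^*(f^*\eta)$ from the identity $f\circ a=m\circ(\id_G\times f)$, citing equivariance of $f$. But equivariance of $f$ with respect to the conjugation action says $f(g\cdot x)=g\,f(x)\,g^{-1}$, i.e.\ $f\circ a=c\circ(\id_G\times f)$ where $c(g,h)=ghg^{-1}$ is the conjugation map, not the multiplication map $m$. These are not interchangeable: writing $c=m\circ(m\times\mathrm{inv})\circ(p_1,p_2,p_1)$ and using primitivity together with $\mathrm{inv}^*(\eta)=-\eta$ gives $c^*(\eta)=p_1^*(\eta)+p_2^*(\eta)-p_1^*(\eta)=p_2^*(\eta)$, so the computation carried out with the correct map yields $a^*(f^*\eta)=p_2^*(f^*\eta)$ rather than $p_1^*(\eta)+p_2^*(f^*\eta)$. (A direct check: for $X=G$, $f=\id$, the restriction of $c$ to $G\times\{e\}$ is constant, so the $p_1^*(\eta)$ component of $c^*(\eta)$ vanishes.) The paper itself only states the formula ``by the primitivity assumption on $\eta$'' without derivation, so your attempt to fill this gap is the right instinct, but as written it rests on a false identity, and the resulting discrepancy in the twist must be resolved before $a^*$ can be taken to land in $\CT(G,\eta)\otimes\CT(X,f^*(\eta))$ as required for the coaction.
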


\begin{proof}
The group action map $a:G\times X\map X$ is a $G$-map under the adjoint action of $G$, therefore it induces a $*$-homomorphism  of $G$-$C^*$-algebras,
 $$
 a^*:\CT(X, f^*(\eta))\map \CT(G \times X, a^*f^*(\eta)) \cong  \CT(G \times X, p_1^*\eta + p_2^*f^*(\eta))
\cong \CT(G, \eta)\otimes \CT(X, f^*(\eta))
 $$
 since $a^*(f^*\eta) = p_1^*(\eta) + p_2^*f^*(\eta)$ by the primitivity assumption on $\eta$.
 The defining property of the action map $a$ shows that $a^*$ is a comodule map, that is, the diagram below commutes,
\begin{equation*} \label{cup-cap-associativity}
\xymatrix @=8pc @ur { \CT(G, \eta)\otimes \CT(X, f^*(\eta)) \ar[d]_{1\otimes a^*} & 
\CT(X, f^*(\eta))\ar[d]_{a^*} 
\ar[l]^{a^*} \\  \CT(G, \eta)\otimes \CT(G, \eta)\otimes \CT(X, f^*(\eta)) & 
\CT(G, \eta)\otimes \CT(X, f^*(\eta))
\ar[l]^{a^* \otimes 1}}
\end{equation*}

Moreover, the group action map $a\colon G\times X\map X$ is $G$-equivariant under the adjoint action of $G$. Therefore, it induces a $*$-homomorphism of $G$-$C^*$-algebras and in turn that of the crossed product algebras, 
$$a^*:\CT(X, f^*(\eta))\rtimes G\map (\CT(G, \eta)\rtimes G)\otimes(\CT(X, f^*(\eta))\rtimes G).$$ Coassociativity of $a^*$ follows by similar reasoning as above.
\end{proof}

\section{The ring structure on $\K^G(G, \eta)$ and Verlinde modules}

We will derive the ring structure on $\K^G(G, \eta)$ as a special instance of a more 
general result. We will also show that for $(X, f) \in \eC(G, \eta)$, $\K^G(X, f^*(\eta))$
is a module for the ring $\K^G(G, \eta)$ as a special case of a more general result.
The functor $\eF : \eC(G, \eta) \longrightarrow {\rm Mod}(R_\ell(G))$ defined by 
$\eF(X,f)=\K^G(X, f^*(\eta))$ is shown to be a lax monoidal 
which respects the May structures on both categories.

The very general result that we will prove in this section this described as follows. Recall that if $A$ is a 
separable $G$-$C^*$-algebra, then
the equivariant K-homology $\K^G(A)$ is in general only an abelian group. However, if 
$A$ is also a $G$-coalgebra, then we will show that $\K^G(A)$ is a ring, with product induced
by the comultiplication on $A$. In particular, we deduce that the K-homology of a $C^*$-quantum group 
is an ring, which appears to be new. We also prove in an analogous way that the equivariant 
K-theory $\K_G(A)$ is a coring, with coproduct induced
by the comultiplication on $A$. In particular, we deduce that the K-theory of a $C^*$-quantum group 
is a coring, which also appears to be new. Finally, if $A$ is also K-oriented in equivariant K-theory, then 
both $\K^G(A)$ and $\K_G(A)$ are bialgebras in a natural way.

A key step in defining the product in $\K^G(A)$ uses Kasparov's cup-cap product. More precisely, let $A_1, A_2, D, B_1, B_2$ be separable $G$-$C^*$-algebras,
where $G$ is a locally compact group. 
Then the {\em cup-cap product}, (Definition 2.12 of \cite{Kas})
\beq\label{cup-cap}
\otimes_D : \KK^G(A_1, B_1\otimes D) \otimes  \KK^G(D\otimes A_2, B_2) \longrightarrow  \KK^G(A_1 \otimes A_2, B_1\otimes B_2)  
\eeq

\begin{prop}\label{K-alg}
Let $A$ be a separable $G$-$C^*$-algebra which is also a $G$-coalgebra. Then the 
 abelian group $\KK^G(A, \CC)$ admits a ring structure 
induced by the comultiplication on $A$. 
\end{prop}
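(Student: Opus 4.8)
The plan is to realize the multiplication on $\KK^G(A,\CC)$ by combining the comultiplication $\Delta : A \to A\otimes A$ with Kasparov's cup-cap product from \eqref{cup-cap}. Concretely, one specializes the cup-cap product by taking $A_1 = A_2 = A$, $B_1 = B_2 = \CC$, and $D = \CC$, which gives a pairing
\[
\otimes_\CC : \KK^G(A, \CC) \otimes \KK^G(A, \CC) \longrightarrow \KK^G(A\otimes A, \CC).
\]
Then one postcomposes with the map $\Delta^* : \KK^G(A\otimes A,\CC)\to \KK^G(A,\CC)$ induced by the $*$-homomorphism $\Delta$ of $G$-$C^*$-algebras. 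The product of two classes $x, y \in \KK^G(A,\CC)$ is defined to be $x\cdot y := \Delta^*(x \otimes_\CC y)$. (One should check the exterior-product version $\KK^G(A,\CC)\otimes\KK^G(A,\CC)\to\KK^G(A\otimes A,\CC)$ is indeed the relevant degenerate case of Kasparov's cup-cap product, using that tensoring with $\CC$ is neutral; this is standard.)

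Next I would verify the ring axioms. Bilinearity over $\ZZ$ is immediate from the bilinearity of the cup-cap product and the additivity of $\Delta^*$. For associativity, the key point is the interplay between the associativity of Kasparov's product (the associativity of $\otimes_D$, which is Theorem 2.14 in \cite{Kas}) and the coassociativity of $\Delta$. One forms both $(x\cdot y)\cdot z$ and $x\cdot(y\cdot z)$, expands each as a triple cup-cap product followed by pullback along $(\Delta\otimes\id)\circ\Delta = (\id\otimes\Delta)\circ\Delta : A\to A\otimes A\otimes A$, and uses functoriality of pullback together with associativity of the exterior Kasparov product to identify the two. This is exactly the $C^*$-algebraic dualization of the usual argument that a coalgebra structure dualizes to an algebra structure on the dual. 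If $A$ additionally has a counit $\varepsilon : A\to\CC$ (i.e.\ is a coalgebra with counit), then $\varepsilon_*(1)\in\KK^G(A,\CC)$, or rather the class of $\varepsilon$ viewed appropriately, serves as the multiplicative unit, again by the counit axioms for $\Delta$; if only a non-counital coalgebra is assumed, the ring is correspondingly non-unital, which is fine for the applications (and matches the usual normalization issues in $\K^G(G,\eta)$).

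I expect the main obstacle to be bookkeeping rather than conceptual: making the degenerate cup-cap product precise and checking that it really does coincide with the external Kasparov product $\KK^G(A,\CC)\otimes\KK^G(A,\CC)\to\KK^G(A\otimes A,\CC)$, and then tracking the various canonical identifications (such as $A\otimes\CC\cong A$ and the associativity isomorphisms for $\otimes$) carefully enough that the associativity diagram genuinely commutes on the nose. Once one is comfortable that Kasparov's product is associative and functorial in the appropriate sense, the proof is a formal diagram chase: the only input beyond \cite{Kas} is the coalgebra structure on $A$, which supplies the map $\Delta$ and its coassociativity. I would present the argument as a commuting diagram combining the coassociativity square for $\Delta$ with the associativity of $\otimes_\CC$, and remark that exactly the same reasoning, applied to $\CT(G,\eta)\rtimes G$ and using Lemma~\ref{lem:coalgstructonCT}, will recover the ring structure on $\K^G(G,\eta)$ as the promised corollary.
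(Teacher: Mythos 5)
Your proposal is correct and follows essentially the same route as the paper: specialize Kasparov's cup-cap product with $B_1=B_2=D=\CC$ to get the exterior pairing into $\KK^G(A\otimes A,\CC)$, compose with $\Delta^*$, and deduce associativity from the associativity of the Kasparov product, the coassociativity of $\Delta$, and functoriality of the cup-cap product under $G$-$*$-homomorphisms. Your remarks on the unit are a small addition the paper does not address, but the core argument is the same diagram chase.
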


\begin{proof}
When $B_1=\CC = B_2=D$  and $A_1=A_2=A$, the cup-cap product  \eqref{cup-cap}
reduces to 
\beq\label{cup-cap2}
\otimes_\CC : \KK^G(A, \CC) \otimes  \KK^G(A, \CC) \longrightarrow  \KK^G(A \otimes A, \CC).
\eeq
The comultiplication $\Delta$ on $A$ is a $G$-$*$-homomorphism
\beq
\Delta: A \longrightarrow A \otimes A.
\eeq
Since equivariant-K-homology is contravariant with respect to $G$-$*$-homomorphisms, we get
\beq\label{comult}
\Delta^*: \KK^G(A \otimes A, \CC) \longrightarrow  \KK^G(A, \CC).
\eeq
The composition of the morphisms in equations \eqref{cup-cap2} and  \eqref{comult} gives
$$
\circ: \KK^G( A, \CC) \otimes  \KK^G( A, \CC) \longrightarrow  \KK^G( A, \CC),
$$
which is a product $\circ$ on equivariant K-homology $ \KK^G( A, \CC) $ induced by the comultiplication. The associativity of the product $\circ$  follows from the associativity of the cup-cap product (Theorem 2.14 of \cite{Kas}),
the coassociativity of the comultiplication $\Delta$ on $A$ and the naturality of the cup-cap product under $G$-$C^*$-homomorphisms. 

More precisely, the associativity of Kasparov's cup-cap product says that the following diagram commutes,
\begin{equation*} \label{cup-cap-associativity}
\xymatrix @=8pc @ur { \K^G(A\otimes A) \otimes \K^G(A) \ar[d]_{\otimes_\CC} & 
\K^G(A)\otimes \K^G(A)\otimes \K^G(A) \ar[d]_{1\otimes \otimes_\CC} 
\ar[l]^{\otimes_\CC \otimes 1} \\ \K^G(A\otimes A\otimes A) & 
\K^G(A)\otimes \K^G(A\otimes A)
\ar[l]^{\otimes_\CC}}
\end{equation*}
where $\K^G(A)$ denotes $\KK^G(A, \CC)$.

On the other hand, the coassociativity of the comultiplication $\Delta$ says that the following diagram commutes,
\begin{equation*} \label{cup-cap-associativity}
\xymatrix @=8pc @ur { A\otimes A \ar[d]_{1\otimes \Delta} & 
A \ar[d]_{\Delta} 
\ar[l]^{\Delta} \\  A\otimes A\otimes A & 
A\otimes A
\ar[l]^{\Delta \otimes 1}}
\end{equation*}

Therefore the induced diagram in equivariant K-homology commutes

\begin{equation*} \label{cup-cap-associativity}
\xymatrix @=8pc @ur { \K^G(A\otimes A) \ar[d]_{\Delta^*} & 
\K^G(A\otimes A\otimes A) \ar[d]_{1\otimes \Delta^*} 
\ar[l]^{\Delta^* \otimes 1} \\  \K^G(A) & 
\K^G(A\otimes A)
\ar[l]^{\Delta^*}}
\end{equation*}

Therefore one has the commutative diagram,

\beqn
\xymatrix{
\K^G(A)\otimes \K^G(A)\otimes \K^G(A) \ar[r]^{1\otimes(\otimes_\CC)}\ar[d]_{(\otimes_\CC)\otimes 1} & \K^G(A)\otimes \K^G(A\otimes A)\ar[r]^{1\otimes \Delta^*}\ar[d]_{\otimes_\CC} & \K^G(A)\otimes \K^G(A)\ar[d]_{\otimes_\CC} \\
\K^G(A\otimes A)\otimes \K^G(A) \ar[r]^{\otimes_\CC}\ar[d]_{\Delta^*\otimes 1} & \K^G(A\otimes A\otimes A) \ar[r]^{1\otimes \Delta^*}\ar[d]_{\Delta^*\otimes 1} & \K^G(A\otimes A)\ar[d]_{\Delta^*}\\
 \K^G(A)\otimes \K^G(A) \ar[r]^{\otimes_\CC} & \K^G(A\otimes A) \ar[r]^{\Delta^*} & \K^G(A)
}
\eeqn

The top left hand square commutes since the cup-cap product is associative, the bottom right hand square 
commutes since the comultiplication is coassociative, while the remaining squares commute because 
the cup-cap product is functorial under $G$-$C^*$-homomorphisms.

Therefore one has the commutative diagram,

\begin{equation*} \label{circ-associativity}
\xymatrix @=8pc @ur { \K^G(A) \otimes \K^G(A) \ar[d]_{\circ} & 
\K^G(A)\otimes \K^G(A)\otimes \K^G(A) \ar[d]_{1\otimes \circ} 
\ar[l]^{\circ \otimes 1} \\ \K^G(A) & 
\K^G(A)\otimes \K^G(A)
\ar[l]^{\circ}}
\end{equation*}
which precisely says that product $\circ$ is associative.

\end{proof}

The following corollary was was first established in \cite{FHT1,FHT2,FHT3}. See also \cite{TuXu}.

\begin{cor}
Let $G$ be a compact Lie group, and consider the congugation action of $G$ on itself.
Let $[\eta]\in\H^3_G(G,\ZZ)$ a primitive cohomology class. 
Then the abelian group $\K^G(G,\eta) =\KK^G(\CT(G, \eta), \CC)$ admits a  ring structure 
induced by the multiplication $m : G \times G \longrightarrow G$ on $G$. 
\end{cor}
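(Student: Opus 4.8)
The plan is to realize this corollary as a direct application of Proposition \ref{K-alg}. The hypothesis of that proposition requires a separable $G$-$C^*$-algebra which is also a $G$-coalgebra, and its conclusion is precisely that $\KK^G(A,\CC)$ carries a ring structure induced by the comultiplication. So the entire task reduces to checking that $A = \CT(G,\eta)$ satisfies these hypotheses, with the comultiplication being the one induced by the group multiplication $m$.

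First I would observe that $\CT(G,\eta)$ is a separable $G$-$C^*$-algebra: it is the algebra of continuous sections vanishing at infinity of the equivariant DD-bundle $\eK_P$ over the compact space $G$, with the conjugation $G$-action, and separability follows from second countability of $G$ together with separability of $\eK$. Next, by Lemma \ref{lem:coalgstructonCT}, since $\eta$ is a primitive DD-bundle on $G$, the algebra $\CT(G,\eta)$ carries a natural coalgebra structure $m^* : \CT(G,\eta) \to \CT(G,\eta)\otimes\CT(G,\eta)$ induced by the multiplication $m : G\times G \to G$; that lemma already verifies coassociativity via the associativity of $m$ and of the $C^*$-tensor product, and it records that $m$ is $G$-equivariant for the conjugation (adjoint) action, so $m^*$ is a $G$-$*$-homomorphism. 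Thus $\CT(G,\eta)$ is a $G$-$C^*$-algebra equipped with a $G$-coalgebra structure, exactly the input of Proposition \ref{K-alg}.

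Applying Proposition \ref{K-alg} with $A = \CT(G,\eta)$ then yields a ring structure on $\K^G(G,\eta) = \KK^G(\CT(G,\eta),\CC)$, with product $\circ$ given by the composite of Kasparov's cup-cap product $\otimes_\CC : \KK^G(\CT(G,\eta),\CC)\otimes\KK^G(\CT(G,\eta),\CC) \to \KK^G(\CT(G,\eta)\otimes\CT(G,\eta),\CC)$ with the map $(m^*)^* : \KK^G(\CT(G,\eta)\otimes\CT(G,\eta),\CC) \to \KK^G(\CT(G,\eta),\CC)$ induced by contravariance of equivariant K-homology applied to $m^*$. Unwinding the identifications, this product is precisely the one induced by the multiplication map $m : G\times G \to G$, which is what the statement asserts. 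Associativity is inherited from Proposition \ref{K-alg}, which in turn rests on the associativity and naturality of the cup-cap product together with the coassociativity established in Lemma \ref{lem:coalgstructonCT}.

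There is essentially no obstacle here beyond bookkeeping: the one point that needs a word of care is the chain of canonical identifications $\CT(G\times G, m^*(\eta)) \cong \CT(G\times G, p_1^*(\eta)+p_2^*(\eta)) \cong \CT(G,\eta)\otimes\CT(G,\eta)$, the first coming from the primitivity $m^*(\eta) \cong p_1^*(\eta)+p_2^*(\eta)$ of the DD-bundle and the equivariant Dixmier--Douady theorem, the second being the standard external tensor decomposition of continuous trace algebras over a product. One should note that this first identification is only canonical up to stable isomorphism, but since equivariant K-homology is invariant under stable isomorphism this suffices to make the induced ring structure well defined, and the diagrams in the proof of Proposition \ref{K-alg} commute on the nose after these identifications. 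Hence the corollary follows, giving a new proof of the ring structure on $\K^G(G,\eta)$ originally obtained in \cite{FHT1,FHT2,FHT3}.
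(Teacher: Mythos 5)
Your proposal is correct and follows essentially the same route as the paper: both reduce the corollary to Proposition \ref{K-alg} by exhibiting $\CT(G,\eta)$ as a $G$-coalgebra via $m^*$, with coassociativity coming from the associativity of $m$ and of the $C^*$-tensor product (the paper re-verifies this directly where you cite Lemma \ref{lem:coalgstructonCT}, but the content is identical). Your additional remarks on separability and on the stable-isomorphism ambiguity in the Dixmier--Douady identification are sensible bookkeeping that the paper leaves implicit.
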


\begin{proof}
Setting $A = \CT(G, \eta)$ in Proposition \ref{K-alg}, we need to show that $A$ is a $G$-coalgebra.

The multiplication map $m\colon G\times G\map G$
 is equivariant under the adjoint action of $G$, therefore it induces a $*$-homomorphism  of $G$-$C^*$-algebras which is a comultiplication,
 $$
 m^*:\CT(G, \eta)\map \CT(G, \eta)\otimes \CT(G, \eta).
 $$
 Since the multiplication map $m$ and the $C^*$-tensor product $\otimes$ are associative, it follows that the induced 
$*$-homomorphism $m^*$ is coassociative.

\end{proof}

The following is an immediate corollary of Proposition \ref{K-alg}.

\begin{cor}\label{q-alg}
Let $A$ be a compact $C^*$-quantum group. Then the 
 abelian group $\KK(A, \CC)$ admits a ring structure 
induced by the comultiplication on $A$. 
\end{cor}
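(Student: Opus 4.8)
The plan is to derive Corollary \ref{q-alg} directly from Proposition \ref{K-alg} by observing that a compact $C^*$-quantum group is, in particular, an example of the structure hypothesised in that proposition with the group $G$ taken to be trivial. Recall that a compact $C^*$-quantum group is a unital separable $C^*$-algebra $A$ together with a coassociative unital $*$-homomorphism $\Delta : A \to A \otimes A$ (the comultiplication) satisfying the usual cancellation/density conditions. For the purposes of the ring structure on $\KK(A,\CC)$ we only need the part of this data that is already abstracted in Proposition \ref{K-alg}: namely that $A$ is a separable $C^*$-algebra equipped with a coassociative comultiplication, i.e.\ a coalgebra object in separable $C^*$-algebras.

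First I would take $G$ to be the trivial group in Proposition \ref{K-alg}, so that a separable $G$-$C^*$-algebra is just a separable $C^*$-algebra, a $G$-$*$-homomorphism is just a $*$-homomorphism, and $\KK^G(A,\CC)$ specialises to the ordinary Kasparov group $\KK(A,\CC)$. Then I would note that a compact $C^*$-quantum group $A$, by definition, carries a coassociative comultiplication $\Delta : A \to A\otimes A$, so $A$ is a $G$-coalgebra in the (trivial-group) sense required by the hypothesis of Proposition \ref{K-alg}. Applying that proposition verbatim then yields a product $\circ$ on $\KK(A,\CC)$, obtained as the composition of Kasparov's cup-cap product $\otimes_\CC : \KK(A,\CC)\otimes\KK(A,\CC)\to\KK(A\otimes A,\CC)$ with the map $\Delta^* : \KK(A\otimes A,\CC)\to\KK(A,\CC)$ induced by the comultiplication, and associativity of $\circ$ follows from the associativity of the cup-cap product together with coassociativity of $\Delta$, exactly as in the commuting-diagram argument given in the proof of Proposition \ref{K-alg}.

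There is essentially no obstacle here; the only point requiring a word of comment is the bookkeeping that the trivial-group specialisation of equivariant $\KK$-theory is ordinary $\KK$-theory and that all the structure maps (cup-cap product, functoriality under $*$-homomorphisms) restrict correctly — this is standard from \cite{Kas, Black}. If one wished to record slightly more, one could remark that the counit of the quantum group (evaluation at the Haar state, or rather the co-unit $\epsilon : A \to \CC$) furnishes a multiplicative unit for $\circ$, since $\epsilon$ induces an element of $\KK(A,\CC)$ that acts as a two-sided identity by the counit axioms; but this is not needed for the bare statement, which only asserts a ring structure.

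\begin{proof}
Take $G$ to be the trivial group in Proposition \ref{K-alg}. Then a separable $G$-$C^*$-algebra is simply a separable $C^*$-algebra, the group $\KK^G(A,\CC)$ is the ordinary Kasparov group $\KK(A,\CC)$, and Kasparov's equivariant cup-cap product \eqref{cup-cap} reduces to the ordinary cup-cap product. Since a compact $C^*$-quantum group $A$ comes equipped with a coassociative comultiplication $\Delta : A \to A\otimes A$, it is in particular a $G$-coalgebra for the trivial group $G$. Proposition \ref{K-alg} now applies and produces a ring structure on $\KK(A,\CC)$ whose product is the composite
$$
\KK(A,\CC)\otimes\KK(A,\CC)\xrightarrow{\ \otimes_\CC\ }\KK(A\otimes A,\CC)\xrightarrow{\ \Delta^*\ }\KK(A,\CC),
$$
with associativity following from the associativity of the cup-cap product, the coassociativity of $\Delta$, and the functoriality of the cup-cap product under $*$-homomorphisms, exactly as in the proof of Proposition \ref{K-alg}.
\end{proof}
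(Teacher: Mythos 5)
Your proposal is correct and is exactly the argument the paper intends: Corollary \ref{q-alg} is stated there as an immediate consequence of Proposition \ref{K-alg}, obtained by taking $G$ trivial so that $\KK^G(A,\CC)$ becomes $\KK(A,\CC)$ and the comultiplication of the quantum group supplies the required coalgebra structure. Your optional aside about the counit is inessential (and one would need care there, since for a compact quantum group the counit need not be bounded on the $C^*$-completion), but the main argument matches the paper.
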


We next prove another general result.

\begin{prop}\label{K-mod}
Let $A_1$ be a separable $G$-$C^*$-algebra which is also a $G$-coalgebra. 
Let $A_2$ be another separable $G$-$C^*$-algebra which is also a
$G$-comodule for $A_1$.
Then the abelian group $\K^G(A_2) =\KK^G(A_2, \CC)$ is a module for the algebra
$\K^G(A_1)=\KK^G(A_1, \CC)$.
\end{prop}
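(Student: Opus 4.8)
The plan is to mimic the proof of Proposition \ref{K-alg}, replacing the role of the comultiplication $\Delta: A \to A \otimes A$ by the comodule coaction $\rho: A_2 \to A_1 \otimes A_2$, and the role of the second copy of $\K^G(A)$ by $\K^G(A_2)$. Concretely, I would first specialize Kasparov's cup-cap product \eqref{cup-cap} by setting $B_1 = \CC = B_2 = D$, $A_1 \rightsquigarrow A_1$ and $A_2 \rightsquigarrow A_2$ (the algebras of this proposition), to obtain a pairing
\begin{equation*}
\otimes_\CC : \KK^G(A_1, \CC) \otimes \KK^G(A_2, \CC) \longrightarrow \KK^G(A_1 \otimes A_2, \CC).
\end{equation*}
Then, since the coaction $\rho: A_2 \to A_1 \otimes A_2$ is a $G$-$*$-homomorphism and equivariant K-homology is contravariant, it induces $\rho^*: \KK^G(A_1 \otimes A_2, \CC) \to \KK^G(A_2, \CC)$. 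Composing gives the candidate module action
\begin{equation*}
\cdot : \K^G(A_1) \otimes \K^G(A_2) \longrightarrow \K^G(A_2), \qquad (u, v) \mapsto \rho^*(u \otimes_\CC v).
\end{equation*}

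Next I would verify the module axiom $(u_1 \circ u_2)\cdot v = u_1 \cdot (u_2 \cdot v)$, where $\circ$ is the ring structure on $\K^G(A_1)$ from Proposition \ref{K-alg}. This is the heart of the argument and is, just as in Proposition \ref{K-alg}, a diagram chase built from three inputs: the associativity of Kasparov's cup-cap product (Theorem 2.14 of \cite{Kas}), the coassociativity-type identity for a comodule, namely the commutativity of
\begin{equation*}
\xymatrix @=5pc @ur {
A_1 \otimes A_2 \ar[d]_{1 \otimes \rho} &
A_2 \ar[d]_{\rho} \ar[l]^{\rho} \\
A_1 \otimes A_1 \otimes A_2 &
A_1 \otimes A_2 \ar[l]^{\Delta \otimes 1}
}
\end{equation*}
(which is precisely the definition of $A_2$ being a comodule over the coalgebra $A_1$, as recorded in Lemma \ref{comod}), and the functoriality/naturality of the cup-cap product under $G$-$*$-homomorphisms. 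Applying $\KK^G(-, \CC)$ to the square above and pasting it with the associativity square for $\otimes_\CC$ and the naturality squares produces a large commuting diagram on $\K^G(A_1) \otimes \K^G(A_1) \otimes \K^G(A_2)$ whose two outer composites are exactly $(u_1 \circ u_2)\cdot v$ and $u_1 \cdot (u_2 \cdot v)$; I would display this $3 \times 3$ grid in the same style as in Proposition \ref{K-alg} and note which cell commutes for which reason (top-left: associativity of cup-cap; bottom-right: comodule coassociativity; the rest: functoriality).

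I would also note the unitality, if one is claiming a unital module: when $A_1$ is a counital $G$-coalgebra the counit $\varepsilon: A_1 \to \CC$ induces the unit $1 \in \K^G(A_1)$, and the comodule counit identity $(\varepsilon \otimes 1)\circ \rho = \id_{A_2}$ together with the evident normalization of the cup-cap product against $\KK^G(\CC,\CC) = \ZZ$ gives $1 \cdot v = v$; this is a short additional diagram chase of the same flavour. The main obstacle is not conceptual but bookkeeping: one must be careful that the instances of the cup-cap product appearing in $u_1 \circ u_2$ (internal to $\K^G(A_1)$) and in the module action are the \emph{same} specialization of \eqref{cup-cap} up to the canonical associativity isomorphisms of the minimal $C^*$-tensor product, so that Kasparov's associativity theorem applies verbatim; handling these identifications cleanly — exactly as was done, somewhat tersely, in the proof of Proposition \ref{K-alg} — is the only real subtlety, and everything else is formal.
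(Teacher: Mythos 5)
Your proposal follows the paper's own proof essentially verbatim: specialize the cup-cap product to $\otimes_\CC : \KK^G(A_1,\CC)\otimes\KK^G(A_2,\CC)\to\KK^G(A_1\otimes A_2,\CC)$, pull back along the coaction, and verify the module axiom via the same $3\times 3$ grid whose cells commute by cup-cap associativity, comodule coassociativity, and functoriality of the cup-cap product under $G$-$*$-homomorphisms. If anything your version is slightly cleaner than the paper's: the bottom arrow of your comodule square is correctly labelled $\Delta\otimes 1$ where the paper writes $a\otimes 1$, and your remark on unitality addresses a point the paper leaves implicit.
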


\begin{proof}
A special case of the cup-cap product in equation \eqref{cup-cap}, reduces when $B_1=\CC = B_2=D$  to
\beq\label{cup-cap4}
\otimes_\CC : \KK^G(A_1, \CC) \otimes  \KK^G(A_2, \CC) \longrightarrow  \KK^G(A_1 \otimes A_2, \CC).
\eeq

 The $G$-comodule action map 
$$a : A_2 \longrightarrow A_1 \otimes A_2.$$
induces a $*$-homomorphism  of $G$-$C^*$-algebras.
Using the fact that K-homology is contravariant with respect to $*$-homomorphisms, we get a canonical abelian group homomorphism
\beq \label{action}
a^* :\KK^G(A_1\otimes A_2, \CC)\longrightarrow  \KK^G( A_2, \CC).
\eeq
Composing $\otimes_\CC$ with $\Delta^*$, we obtain
$$
\circ : \KK^G( A_1, \CC) \otimes  \KK^G(  A_2, \CC) \longrightarrow  \KK^G( A_2, \CC).
$$
The fact that $\circ$ is an action follows from the associativity of the cup-cap product (Theorem 2.14 of \cite{Kas}) and the defining property of the comodule map $\Delta$.

More precisely, the associativity of Kasparov's cup-cap product says that the following diagram commutes,
\begin{equation*} \label{cup-cap-associativity}
\xymatrix @=8pc @ur { \K^G(A_1\otimes A_1) \otimes \K^G(A_2) \ar[d]_{\otimes_\CC} & 
\K^G(A_1)\otimes \K^G(A_1)\otimes \K^G(A_2) \ar[d]_{1\otimes \otimes_\CC} 
\ar[l]^{\otimes_\CC \otimes 1} \\ \K^G(A_1\otimes A_1\otimes A_2) & 
\K^G(A_1)\otimes \K^G(A_1\otimes A_2)
\ar[l]^{\otimes_\CC}}
\end{equation*}
where $\K^G(A)$ denotes $\KK^G(A, \CC)$, $A_1=\CT(G, \eta)$ and $A_2=  \CT(X, f^*(\eta))$.

On the other hand, the defining property of the comodule map $\Delta$ says that the following diagram commutes,
\begin{equation*} \label{cup-cap-associativity}
\xymatrix @=8pc @ur { A_1\otimes A_2 \ar[d]_{1\otimes a} & 
A_2 \ar[d]_{a} 
\ar[l]^{a} \\  A_1\otimes A_1\otimes A_2 & 
A_1\otimes A_2
\ar[l]^{a \otimes 1}}
\end{equation*}

Therefore the induced diagram in equivariant K-homology commutes

\begin{equation*} \label{cup-cap-associativity}
\xymatrix @=8pc @ur { \K^G(A_1\otimes A_2) \ar[d]_{a^*} & 
\K^G(A_1\otimes A_1\otimes A_2) \ar[d]_{1\otimes a^*} 
\ar[l]^{a^* \otimes 1} \\  \K^G(A_2) & 
\K^G(A_1\otimes A_2)
\ar[l]^{a^*}}
\end{equation*}

Therefore one has the commutative diagram,

\beqn
\xymatrix{
\K^G(A_1)\otimes \K^G(A_1)\otimes \K^G(A_2) \ar[r]^{1\otimes(\otimes_\CC)}\ar[d]_{(\otimes_\CC)\otimes 1} & \K^G(A_1)\otimes 
\K^G(A_1\otimes A_2)\ar[r]^{1\otimes a^*}\ar[d]_{\otimes_\CC} & \K^G(A_1)\otimes \K^G(A_2)\ar[d]_{\otimes_\CC} \\
\K^G(A_1\otimes A_1)\otimes \K^G(A_2) \ar[r]^{\otimes_\CC}\ar[d]_{a^*\otimes 1} & \K^G(A_1\otimes A_1\otimes A_2) \ar[r]^{1\otimes  a^*}\ar[d]_{a^*\otimes 1} & \K^G(A_1\otimes A_2)\ar[d]_{a^*}\\
 \K^G(A_1)\otimes \K^G(A_2) \ar[r]^{\otimes_\CC} & \K^G(A_1\otimes A_2) \ar[r]^{a^*} & \K^G(A_2)
}
\eeqn

The top left hand square commutes since the cup-cap product is associative, the bottom right hand square 
commutes since the coaction is coassociative, while the remaining squares commute because 
the cup-cap product is functorial under $G$-$C^*$-homomorphisms.

Therefore the product $\circ$ satisfies the commutative diagram

\begin{equation*} \label{cup-cap-associativity}
\xymatrix @=8pc @ur { \K^G(A_1) \otimes \K^G(A_2) \ar[d]_{\circ} & 
\K^G(A_1)\otimes \K^G(A_1)\otimes \K^G(A_2) \ar[d]_{1\otimes \circ} 
\ar[l]^{\circ \otimes 1} \\ \K^G(A_2) & 
\K^G(A_1)\otimes \K^G(A_2)
\ar[l]^{\circ}}
\end{equation*}
which precisely says that $\circ$ is an action.
\end{proof}

\begin{cor}\label{K-mod2}
Let $G$ be a compact Lie group, and $[\eta]\in\H^3_G(G,\ZZ)$ a primitive cohomology class. Let  $(X, f) \in \eC(G, \eta)$. Then the abelian group $\K^G(X, f^*(\eta)) =\KK^G(\CT(X, f^*(\eta)), \CC)$ admits a  
$\K^G(G,\eta)=\KK^G(\CT(G, \eta), \CC)$-module structure,  induced by the group action map $a:G\times X\map X$. 
\end{cor}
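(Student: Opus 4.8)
The plan is to derive Corollary~\ref{K-mod2} as a direct specialization of Proposition~\ref{K-mod}, exactly as the surrounding text advertises. First I would set $A_1 = \CT(G, \eta)$ and $A_2 = \CT(X, f^*(\eta))$. By Lemma~\ref{lem:coalgstructonCT}, $A_1$ is a separable $G$-$C^*$-algebra which is also a $G$-coalgebra, with comultiplication $m^*$ induced by the group multiplication $m: G\times G\map G$; this is precisely the hypothesis on $A_1$ in Proposition~\ref{K-mod}. By Lemma~\ref{comod}, $A_2 = \CT(X, f^*(\eta))$ is a $G$-comodule over the coalgebra $A_1$, with comodule structure map $a^*$ induced by the group action map $a: G\times X\map X$; this supplies the hypothesis on $A_2$. (One should note in passing that $\CT(X, f^*(\eta))$ is separable because $X$ is a compact $G$-manifold, hence metrizable, so its algebra of sections of the DD-bundle vanishing at infinity is separable.)

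With the hypotheses verified, Proposition~\ref{K-mod} applies verbatim and produces a module action
$$
\circ : \KK^G(\CT(G,\eta), \CC) \otimes \KK^G(\CT(X, f^*(\eta)), \CC) \longrightarrow \KK^G(\CT(X, f^*(\eta)), \CC),
$$
that is, a $\K^G(G,\eta)$-module structure on $\K^G(X, f^*(\eta))$, built as the composition of Kasparov's cup-cap product $\otimes_\CC$ with the map $a^*$ induced on equivariant K-homology by the comodule structure map $a^*$ of $A_2$. Associativity of this action — i.e.\ the defining axiom $(\alpha\circ\beta)\circ\xi = \alpha\circ(\beta\circ\xi)$ for $\alpha,\beta\in\K^G(G,\eta)$ and $\xi\in\K^G(X,f^*(\eta))$ — is the content of the large commutative diagram in the proof of Proposition~\ref{K-mod}, which rests on the associativity of the cup-cap product (Theorem 2.14 of \cite{Kas}), the coassociativity of $m^*$ (Lemma~\ref{lem:coalgstructonCT}), the comodule axiom for $a^*$ (Lemma~\ref{comod}), and naturality of the cup-cap product under $G$-$C^*$-homomorphisms. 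I would also remark that the fact that $\K^G(G,\eta)$ is in fact a \emph{ring} (not merely an algebra object in some weaker sense) is already furnished by the Corollary preceding Proposition~\ref{q-alg}, so the phrase ``module for the algebra $\K^G(G,\eta)$'' is justified.

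There is essentially no genuine obstacle here: the corollary is a translation of the already-proved Proposition~\ref{K-mod} into the geometric language of the category $\eC(G,\eta)$, and all the work has been front-loaded into Lemmas~\ref{lem:coalgstructonCT} and~\ref{comod} and into Proposition~\ref{K-mod} itself. The only points requiring a word of care are (i) checking separability of the continuous trace algebras, which follows from compactness (hence second countability) of $X$ and $G$ and separability of the fibre $\eK$; and (ii) confirming that the identifications $\K^G(X, f^*(\eta)) \cong \KK^G(\CT(X, f^*(\eta)), \CC)$ and $\K^G(G,\eta) \cong \KK^G(\CT(G,\eta),\CC)$ used throughout are the standard ones identifying twisted equivariant K-homology with Kasparov K-homology of the associated continuous trace algebra, and that under these identifications the module action $\circ$ is the one induced by the geometric action $a: G\times X\map X$ as claimed. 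Both are routine, so the proof reduces to assembling the cited results.

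\begin{proof}
Set $A_1 = \CT(G, \eta)$ and $A_2 = \CT(X, f^*(\eta))$. Both are separable $G$-$C^*$-algebras, separability following from the fact that $G$ and $X$ are compact (hence second countable) and the fibre algebra $\eK$ is separable. By Lemma~\ref{lem:coalgstructonCT}, $A_1$ carries a natural $G$-coalgebra structure, with comultiplication $m^*: \CT(G,\eta) \map \CT(G,\eta)\otimes \CT(G,\eta)$ induced by the multiplication $m: G\times G \map G$. By Lemma~\ref{comod}, $A_2 = \CT(X, f^*(\eta))$ is a $G$-comodule over the coalgebra $A_1$, with comodule structure map $a^*: \CT(X, f^*(\eta)) \map \CT(G,\eta)\otimes\CT(X, f^*(\eta))$ induced by the group action map $a: G\times X \map X$. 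Thus the hypotheses of Proposition~\ref{K-mod} are satisfied.

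Applying Proposition~\ref{K-mod} with this choice of $A_1$ and $A_2$, we conclude that the abelian group
$$
\K^G(X, f^*(\eta)) = \KK^G(\CT(X, f^*(\eta)), \CC)
$$
is a module over the ring
$$
\K^G(G, \eta) = \KK^G(\CT(G, \eta), \CC),
$$
with the module action $\circ$ obtained as the composition of Kasparov's cup-cap product
$$
\otimes_\CC : \KK^G(\CT(G,\eta), \CC) \otimes \KK^G(\CT(X,f^*(\eta)), \CC) \longrightarrow \KK^G(\CT(G,\eta)\otimes \CT(X,f^*(\eta)), \CC)
$$
with the map
$$
a^* : \KK^G(\CT(G,\eta)\otimes \CT(X,f^*(\eta)), \CC) \longrightarrow \KK^G(\CT(X,f^*(\eta)), \CC)
$$
induced by the comodule structure map $a^*$. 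The associativity of $\circ$, i.e.\ the module axiom, is exactly the commutativity of the large diagram established in the proof of Proposition~\ref{K-mod}, which follows from the associativity of the cup-cap product (Theorem 2.14 of \cite{Kas}), the coassociativity of $m^*$ (Lemma~\ref{lem:coalgstructonCT}), the comodule axiom for $a^*$ (Lemma~\ref{comod}), and the naturality of the cup-cap product under $G$-$C^*$-homomorphisms. Finally, that $\K^G(G,\eta)$ is genuinely a ring is the content of the Corollary following Proposition~\ref{K-alg} (originally \cite{FHT1, FHT2, FHT3}), so $\K^G(X, f^*(\eta))$ is indeed a module over the Verlinde algebra, with action induced by the group action map $a: G\times X \map X$.
\end{proof}
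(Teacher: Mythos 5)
Your proposal is correct and follows essentially the same route as the paper: set $A_1 = \CT(G,\eta)$, $A_2 = \CT(X,f^*(\eta))$, observe that $a^*$ makes $A_2$ a comodule over the coalgebra $A_1$ (the paper re-derives this inline using the primitivity of $\eta$, where you instead cite Lemma~\ref{comod}, which is the same computation), and apply Proposition~\ref{K-mod}. The extra remarks on separability and on the identification of twisted equivariant K-homology with the Kasparov K-homology of the continuous trace algebra are harmless and consistent with the paper's conventions.
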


\begin{proof}
Let $(X, f) \in \eC(G, \eta)$. Setting $A_1= \CT(G, \eta)$ and $A_2=  \CT(X, f^*(\eta))$, we see that 
$$
\otimes_\CC: \KK^G( \CT(G, \eta), \CC) \otimes  \KK^G(  \CT(X, f^*(\eta)), \CC) \longrightarrow  \KK^G( \CT(G, \eta)
\otimes \CT(X, f^*(\eta)), \CC).
$$ The group action map $a:G\times X\map X$ is a $G$-map under the adjoint action of $G$, therefore it induces a $*$-homomorphism  of $G$-$C^*$-algebras,
 $$
 a^*:\CT(X, f^*(\eta))\map \CT(G \times X, a^*f^*(\eta)) \cong  \CT(G \times X, p_1^*\eta + p_2^*f^*(\eta))
\cong \CT(G, \eta)\otimes \CT(X, f^*(\eta))
 $$
 since $a^*(f^*\eta) = p_1^*(\eta) + p_2^*f^*(\eta)$ by the primitivity assumption on $\eta$.
 The defining property of the action map $a$ shows that $a^*$ is a comodule map.
The corollary is proved
 by applying Proposition \ref{K-mod}.

\end{proof}

Combining the above with the result of Freed-Hopkins-Teleman \cite{FHT1, FHT2, FHT3}, we have,

\begin{cor}
Let  $(X, f) \in \eC(G, \eta)$, where $G$ is a compact simple Lie group. Then $\K^G(X, f^*(\eta)) $ is a module over the Verlinde algebra $R_\ell(G)$, where 
$\ell$ is the level determined by twist $\eta$. 
\end{cor}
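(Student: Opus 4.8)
The plan is to deduce this as a direct combination of Corollary \ref{K-mod2} with the Freed--Hopkins--Teleman theorem. By Corollary \ref{K-mod2}, for any $(X,f)\in\eC(G,\eta)$ the abelian group $\K^G(X,f^*(\eta))=\KK^G(\CT(X,f^*(\eta)),\CC)$ carries a module structure over the ring $\K^G(G,\eta)=\KK^G(\CT(G,\eta),\CC)$, with the action induced by the group action map $a\colon G\times X\map X$ via the comodule structure of $\CT(X,f^*(\eta))$ over the coalgebra $\CT(G,\eta)$ established in Lemma \ref{comod}. So the only thing left is to identify the ring $\K^G(G,\eta)$ with the Verlinde algebra $R_\ell(G)$.

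First I would recall that, since $G$ is assumed connected, simply connected and simple, the primitive twist $\eta$ (equivalently, the primitive $G$-equivariant DD-bundle $\eK_P$ on $G$) corresponds to an element of $\H^3_G(G,\ZZ)\cong\ZZ$, and this integer, shifted by the dual Coxeter number, determines a level $\ell$. With this $\ell$, the theorem of Freed--Hopkins--Teleman \cite{FHT1,FHT2,FHT3} provides an isomorphism of abelian groups $\K^G(G,\eta)\cong R_\ell(G)$, and moreover this isomorphism is a ring isomorphism when $\K^G(G,\eta)$ is equipped with the Pontryagin-type product induced by the multiplication map $m\colon G\times G\map G$. That product is exactly the ring structure produced by the corollary preceding \ref{q-alg} (the $A=\CT(G,\eta)$ case of Proposition \ref{K-alg}), since the comultiplication on $\CT(G,\eta)$ is by construction $m^*$. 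Hence the ring structure on $\K^G(G,\eta)$ used in Corollary \ref{K-mod2} coincides, under the FHT isomorphism, with the fusion product on $R_\ell(G)$.

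Putting these two facts together: the module structure of $\K^G(X,f^*(\eta))$ over $\K^G(G,\eta)$ from Corollary \ref{K-mod2}, transported along the FHT ring isomorphism $\K^G(G,\eta)\xrightarrow{\cong}R_\ell(G)$, makes $\K^G(X,f^*(\eta))$ a module over $R_\ell(G)$, as claimed. I would also note that the level $\ell$ is precisely the one determined by $\eta$, since the FHT correspondence between twists and levels is exactly the one referenced in the statement.

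The substantive input here is entirely the FHT theorem, which is quoted rather than proved; the genuinely new content in the excerpt is Proposition \ref{K-mod} and its corollaries, so this final corollary is essentially bookkeeping. The one point that deserves a sentence of care is compatibility of the two product structures on $\K^G(G,\eta)$ --- the one from Proposition \ref{K-alg} via the cup-cap product and $m^*$, versus the one FHT use --- but both are the Pontryagin product induced by $m$, so they agree, and the identification of ring structures is therefore immediate. No separate difficulty arises for the module structure beyond what Corollary \ref{K-mod2} already supplies.
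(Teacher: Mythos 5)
Your argument is exactly the paper's: the corollary is stated as an immediate combination of Corollary \ref{K-mod2} (the $\K^G(G,\eta)$-module structure induced by the action map $a$) with the Freed--Hopkins--Teleman ring isomorphism $\K^G(G,\eta)\cong R_\ell(G)$, which the paper invokes without further proof. Your extra remark checking that the Pontryagin-type product from Proposition \ref{K-alg} agrees with the product FHT use is a sensible point of care that the paper leaves implicit, but it does not change the route.
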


We call such a module over $R_\ell(G)$, a {\em Verlinde module}.

The following is an immediate corollary of Proposition \ref{K-mod}.

\begin{cor}\label{q-mod}
Let $A_1$ be a compact $C^*$-quantum group and 
 $A_2$ be a separable $C^*$-algebra which is also a
comodule for $A_1$.
Then the abelian group $\KK(A_2, \CC)$ is a module for the algebra
$\KK(A_1, \CC)$.
\end{cor}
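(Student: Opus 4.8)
The plan is to deduce Corollary \ref{q-mod} as a direct specialization of Proposition \ref{K-mod}, exactly in the way that Corollary \ref{q-alg} was deduced from Proposition \ref{K-alg}. The point is that a compact $C^*$-quantum group is precisely the data needed to play the role of $A_1$, and a comodule over it is precisely the data needed for $A_2$, once one takes the trivial group $G = \{e\}$.

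First I would recall the definitions: a compact $C^*$-quantum group $A_1$ comes equipped with a coassociative comultiplication, i.e.\ a $*$-homomorphism $\Delta : A_1 \to A_1 \otimes A_1$ satisfying $(\Delta \otimes \id)\circ\Delta = (\id \otimes \Delta)\circ\Delta$; in particular $A_1$ is a coalgebra in the sense used in Proposition \ref{K-alg} and Proposition \ref{K-mod}. Likewise, a comodule $A_2$ over $A_1$ is by definition a $C^*$-algebra equipped with a coaction $*$-homomorphism $a : A_2 \to A_1 \otimes A_2$ satisfying the comodule identity $(\Delta \otimes \id)\circ a = (\id \otimes a)\circ a$, which is exactly the hypothesis on $A_2$ in Proposition \ref{K-mod}. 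Separability of $A_1$ is built into the definition of a compact $C^*$-quantum group on a separable Hilbert space, and separability of $A_2$ is assumed in the statement, so all the $C^*$-algebraic hypotheses of Proposition \ref{K-mod} are met.

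Next I would specialize the equivariance: taking $G$ to be the trivial group, every separable $C^*$-algebra is trivially a $G$-$C^*$-algebra, every $*$-homomorphism is trivially $G$-equivariant, and $\KK^G(-,\CC)$ collapses to ordinary $\KK(-,\CC)$. Thus Proposition \ref{K-mod}, applied with $G = \{e\}$, $A_1$ the given quantum group, and $A_2$ the given comodule, asserts precisely that $\KK(A_2,\CC)$ is a module over the ring $\KK(A_1,\CC)$, where the ring structure on $\KK(A_1,\CC)$ is the one furnished by Corollary \ref{q-alg} (equivalently by Proposition \ref{K-alg} with $G$ trivial), and the module action is $a^* \circ (-\otimes_\CC -)$ with $\otimes_\CC$ the Kasparov cup-cap product. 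This is the assertion of the corollary.

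There is essentially no obstacle here: the only thing to check is that the abstract notions of ``coalgebra'' and ``comodule'' used in Propositions \ref{K-alg} and \ref{K-mod} are genuinely weaker than (hence implied by) the quantum-group axioms — one is discarding the counit, the antipode, and any density/nondegeneracy conditions, keeping only the coassociative comultiplication and the coassociative coaction, which is all the earlier propositions use. So the proof is a one-line reduction, and the work, such as it is, lies only in spelling out that dictionary.

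\begin{proof}
This is the special case of Proposition \ref{K-mod} in which $G$ is the trivial group. Indeed, a compact $C^*$-quantum group $A_1$ is in particular a separable $C^*$-algebra equipped with a coassociative comultiplication $\Delta : A_1 \to A_1\otimes A_1$, hence a (trivially $G$-equivariant) $G$-coalgebra for $G = \{e\}$; and a comodule $A_2$ over $A_1$ is by definition a separable $C^*$-algebra with a coaction $a : A_2 \to A_1\otimes A_2$ satisfying $(\Delta\otimes\id)\circ a = (\id\otimes a)\circ a$, hence a $G$-comodule for $A_1$ in the sense of Proposition \ref{K-mod}. Since $\KK^{\{e\}}(-,\CC) = \KK(-,\CC)$, Proposition \ref{K-mod} gives that $\KK(A_2,\CC)$ is a module over the ring $\KK(A_1,\CC)$, the latter ring structure being the one of Corollary \ref{q-alg}.
\end{proof}
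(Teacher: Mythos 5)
Your proposal is correct and follows the same route as the paper, which states this result as an immediate consequence of Proposition \ref{K-mod} (specializing to trivial $G$, just as Corollary \ref{q-alg} specializes Proposition \ref{K-alg}) and offers no further argument. Your added care in checking that the quantum-group axioms supply exactly the coassociative comultiplication and coaction that Proposition \ref{K-mod} requires is a reasonable elaboration of the same one-line reduction.
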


\subsection{Morphisms of Verlinde modules} 
Here we study morphisms in the category  $\eC(G, \eta)$, and our main result here is
that any morphism in the category  $\eC(G, \eta)$, determines a morphism of Verlinde modules over the Verlinde algebra.

\begin{prop}\label{morphism}
Let $\phi : (X_1, f_1) \longrightarrow  (X_2, f_2) $ be a morphism in $\eC(G, \eta)$, where $G$ is a compact simple Lie group.
Then $\phi_* : \K^G(X_1, f_1^*(\eta))\longrightarrow \K^G(X_2, f_2^*(\eta))$ is a morphism of Verlinde modules.
\end{prop}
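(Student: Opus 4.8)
The plan is to show that the morphism $\phi$ in $\eC(G,\eta)$ induces a map on twisted equivariant $K$-homology that intertwines the $R_\ell(G)$-module structures constructed in Corollary \ref{K-mod2}. First I would unwind what data $\phi$ carries: a $G$-equivariant smooth map $\phi_X : X_1 \to X_2$ with $f_2 \circ \phi_X = f_1$. Since $f_1 = f_2 \circ \phi_X$, pulling back the DD-bundle $\eK_\eta$ along these maps gives a $G$-equivariant isomorphism $\phi_X^*(f_2^*\eK_\eta) \cong f_1^*\eK_\eta$, hence a $G$-$*$-homomorphism $\phi_X^* : \CT(X_2, f_2^*(\eta)) \to \CT(X_1, f_1^*(\eta))$ of the corresponding equivariant continuous-trace algebras (and likewise on crossed products). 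Functoriality of equivariant $K$-homology $\KK^G(-,\CC)$ in the first variable then yields the pushforward $\phi_* : \K^G(X_1, f_1^*(\eta)) \to \K^G(X_2, f_2^*(\eta))$. So the map is defined; what remains is compatibility with the module action.

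The key step is to verify the square
\begin{equation*}
\xymatrix @=4pc {
\K^G(G,\eta) \otimes \K^G(X_1, f_1^*(\eta)) \ar[r]^-{\circ} \ar[d]_{1\otimes \phi_*} & \K^G(X_1, f_1^*(\eta)) \ar[d]^{\phi_*} \\
\K^G(G,\eta) \otimes \K^G(X_2, f_2^*(\eta)) \ar[r]^-{\circ} & \K^G(X_2, f_2^*(\eta))
}
\end{equation*}
commutes. Recall from the proof of Proposition \ref{K-mod} that the action $\circ$ is the composite of Kasparov's cup-cap product $\otimes_\CC$ with the map $a_i^*$ induced by the $G$-comodule structure $a_i^* : \CT(X_i, f_i^*(\eta)) \to \CT(G,\eta)\otimes \CT(X_i, f_i^*(\eta))$ coming from the action map $a_i : G\times X_i \to X_i$. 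Because $\phi_X$ is $G$-equivariant, the square of spaces
\[
\xymatrix{ G\times X_1 \ar[r]^{a_1} \ar[d]_{1\times \phi_X} & X_1 \ar[d]^{\phi_X} \\ G\times X_2 \ar[r]^{a_2} & X_2 }
\]
commutes, and since all structure twists are pulled back from $\eta$ (using $f_1 = f_2\circ\phi_X$ and $a_i^*f_i^*\eta = p_1^*\eta + p_2^*f_i^*\eta$), this lifts to a commuting square of $G$-$*$-homomorphisms of continuous-trace algebras; applying $\KK^G(-,\CC)$ gives commutativity of the outer "$a^*$-square". Combined with the bifunctoriality (naturality) of the cup-cap product $\otimes_\CC$ in its algebra arguments — applied to $\id$ on $\CT(G,\eta)$ and $\phi_X^*$ on $\CT(X_2,f_2^*(\eta)) \to \CT(X_1,f_1^*(\eta))$ — one obtains the desired commuting square by pasting. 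It also follows that $\phi_*$ is a homomorphism of abelian groups, hence altogether a morphism of $R_\ell(G)$-modules after invoking the Freed--Hopkins--Teleman identification $\K^G(G,\eta) \cong R_\ell(G)$.

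The main obstacle I anticipate is purely bookkeeping: one must be careful that the various stable isomorphisms of continuous-trace algebras (of the form $\CT(Y, g^*\eta) \cong \CT(Y', (g')^*\eta)$ induced by maps and by the primitivity identity) are chosen coherently so that the diagrams commute on the nose rather than merely up to non-canonical isomorphism — this is where one uses the equivariant Dixmier--Douady theorem of Atiyah--Segal \cite{AS} to ensure well-definedness up to stable isomorphism, and then the fact that $\KK^G$ is insensitive to stable isomorphism. Once naturality of $\otimes_\CC$ (from \cite{Kas}, Definition 2.12 and its functoriality properties) is in hand, no genuinely new argument is needed; the proof is a diagram chase essentially identical in structure to the associativity chase already carried out in the proof of Proposition \ref{K-mod}, with one of the three tensor factors replaced by the morphism $\phi_X^*$.
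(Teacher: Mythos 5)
Your proposal is correct and follows essentially the same route as the paper: define $\phi_*$ by contravariant functoriality of $\KK^G(-,\CC)$ applied to $\phi_X^*$ on the continuous-trace algebras, then verify the module-compatibility square by pasting the square induced by the equivariance relation $\phi_X\circ a_1=a_2\circ(\id\times\phi_X)$ (using primitivity of $\eta$ to identify the twists) with the square expressing naturality of Kasparov's cup-cap product. The paper's proof decomposes the same rectangle into exactly these two commuting squares, so no further comment is needed.
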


\begin{proof}

Let $\phi : (X_1, f_1) \longrightarrow  (X_2, f_2) $ be a morphism in $\eC(G, \eta)$. 
That is, $\phi: X_1 \map X_2$ is an equivariant map such that the following diagram commutes,
\begin{equation*}
\xymatrix{
X_1 \ar[ddr]_{f_1} \ar[rr]^{\phi} && X_2
\ar[ddl]^{f_2} \\ &&\\
& G   }  
\end{equation*}
That is, $f_1=f_2\circ \phi$.

Then it induces a morphism of groups
$$
\phi_* : \K^G(X_1, f_1^*(\eta))\longrightarrow \K^G(X_2, f_2^*(\eta)).
$$
We will show that $\phi_*$ is actually a  morphism of Verlinde modules over the Verlinde algebra. That is, the following diagram commutes,

\beq\label{Kmorphism}
\xymatrix{
\K^G(G, \eta)\otimes \K^G(X_1, f_1^*(\eta))\ar[rr]^{\qquad\circ_1} \ar[dd]_{\id\otimes \phi_*} && 
\K^G(X_1, f_1^*(\eta))\ar[dd]^{\phi_*}\\
\\
\K^G(G, \eta)\otimes \K^G(X_2, f_2^*(\eta))\ar[rr]_{\qquad\circ_2} &&  \K^G(X_2, f_2^*(\eta))
}
\eeq

That is, for $\xi \in \K^G(G, \eta)$ and $x_1 \in \K^G(X_1, f_1^*(\eta))$, the commutativity of the 
diagram above says that 
$$
\phi_*(\xi\circ_1 x_1) = \xi \circ_2 \phi_*(x_1),
$$
where we define $\circ_j$ below.
Now we have the commutative diagram,

\beq\label{action}
\xymatrix{
G\times X_1  \ar[rr]^{\qquad a_1} \ar[dd]_{\id\times \phi} && 
X_1\ar[dd]^{\phi}\\
\\
G\times X_2 \ar[rr]_{\qquad a_2} &&  X_2
}
\eeq 
Therefore $\phi\circ  a_1 = a_2\circ  \phi$. So the induced maps on equivariant K-homology satisfy
$\phi_*\circ  {a_1}_* = {a_2}_*\circ  \phi_*$. As observed earlier, since $\eta$ is primitive, 
$a_j^*(f_j^*(\eta))= p_1^*(\eta) + p_2^*(f^*(\eta))$, $\, j=1,2$. Consider the diagram,

$$
\xymatrix{
\K^G(G, \eta)\otimes \K^G(X_1, f_1^*(\eta))\ar[rr]^{\qquad\otimes_\CC} \ar[dd]_{\id\otimes \phi_*} && 
\K^G(G\times X_1,a_1^* f_1^*(\eta))\ar[dd]_{\id \times \phi_*}\ar[rr]^{{a_1}_*} && \K^G(X_1, f_1^*(\eta))\ar[dd]^{\phi_*}\\
\\
\K^G(G, \eta)\otimes \K^G(X_2, f_2^*(\eta))\ar[rr]_{\qquad\otimes_\CC} &&
\K^G(G \times X_2, a_2^*f_2^*(\eta))\ar[rr]^{\,\,{a_2}_*} &&  \K^G(X_2, f_2^*(\eta))
}
$$
The commutativity of the right square follows from the commutativity of the diagram  in equation \eqref{action}.
The commutativity of the left square follows by naturality of the cup-cap product of Kasparov \cite{Kas, Black}. Therefore we have justified the commutativity of the diagram in equation \eqref{Kmorphism}, proving the proposition.

\end{proof}

\subsection{The May structure on $\eC(G,\eta)$ and a lax May functor}

Let $\Mod(R_\ell(G))$ denote the abelian category of all modules over the Verlinde ring $R_\ell(G)$. We endow $\Mod(R_\ell(G))$ with the symmetric monoidal structure given by the tensor product $\otimes=\otimes_\ZZ$ of abelian groups, as well as the symmetric monoidal structure given by the
direct sum $\oplus$, giving it a May structure.

Recall that a functor $F:(\cC,\otimes_\cC,\mathbb{I}_\cC)\functor(\cD,\otimes_\cD,\mathbb{I}_\cD)$ between strict monoidal categories is called {\em lax monoidal} if there are natural maps $F(C)\otimes_\cD F(D)\map F(C\otimes_\cC D)$ and $\mathbb{I}_\cD\map F(\mathbb{I}_\cC)$ for every pair of objects $C,D\in\cC$ satisfying certain predictable compatibility conditions. 

\begin{thm} The functor $\eF: (\eC(G, \eta), \coprod, \fusion) \longrightarrow ({\rm Mod}(R_\ell(G)),\oplus, \otimes)$, defined by $$\eF(X, f)=\K^G(X, f^*(\eta))$$ is lax monoidal, respecting the May structures on both categories. 
\end{thm}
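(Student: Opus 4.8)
The plan is to verify that $\eF$ is lax monoidal with respect to each of the two monoidal structures separately, and then check the single compatibility (distributivity) diagram from the definition of a May functor. I would organize the proof around four ingredients: (i) the structure maps for $\otimes = \fusion$ on the source and $\otimes = \otimes_\ZZ$ on the target; (ii) the structure maps for $\oplus = \coprod$ on the source and $\oplus$ on the target; (iii) the coherence (associativity/unit) axioms for each; (iv) the May compatibility square.

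First I would produce the lax structure map for $\fusion$. Given objects $(X_1,f_1)$ and $(X_2,f_2)$, their fusion is $(X_1\times X_2, m\circ(f_1\times f_2))$, and one checks that $\CT(X_1\times X_2, (m\circ(f_1\times f_2))^*\eta)\cong \CT(X_1, f_1^*\eta)\otimes\CT(X_2, f_2^*\eta)$ using primitivity of $\eta$ exactly as in Lemma~\ref{lem:coalgstructonCT}. Then Kasparov's cup-cap product \eqref{cup-cap} with $B_1=B_2=D=\CC$ supplies a natural map
\[
\K^G(X_1, f_1^*\eta)\otimes_\ZZ \K^G(X_2, f_2^*\eta)\longrightarrow \K^G(X_1\times X_2, (m\circ(f_1\times f_2))^*\eta) = \eF((X_1,f_1)\fusion(X_2,f_2)),
\]
which is the lax monoidal structure map; the unit map $R_\ell(G)=\K^G(G,\eta)\to \eF(e,\I)=\K^G(\{e\})$ is induced by the inclusion $\{e\}\hookrightarrow G$ together with the ring structure of Corollary, and one notes it is a module map. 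Associativity of this lax structure is precisely the associativity of the cup-cap product (Theorem 2.14 of \cite{Kas}) combined with associativity of $m$, and the argument is essentially the commuting-square diagram chase already carried out in the proof of Proposition~\ref{K-alg}; I would simply cite that. One must also check these maps are $R_\ell(G)$-module homomorphisms, which again follows from naturality of the cup-cap product and Corollary~\ref{K-mod2}.

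Second, for $\coprod$, the lax structure map $\K^G(X_1, f_1^*\eta)\oplus \K^G(X_2, f_2^*\eta)\to \K^G(X_1\coprod X_2, (f_1\coprod f_2)^*\eta)$ is an isomorphism: K-homology takes disjoint unions to direct sums (the $C^*$-algebra of a disjoint union is a direct sum, and $\KK^G(\cdot,\CC)$ is additive), and it is a module map because the $G$-action and structure maps are defined componentwise. The coherence axioms here are immediate. Finally, the May compatibility square unwinds to the statement that the cup-cap product distributes over direct sums — i.e. $\xi\otimes_\CC(x\oplus y) = (\xi\otimes_\CC x)\oplus(\xi\otimes_\CC y)$ under the identifications $\CT(X\times(Y_1\coprod Y_2),\cdots)\cong \CT(X\times Y_1,\cdots)\oplus \CT(X\times Y_2,\cdots)$ — which holds because $\otimes_\CC$ is additive in each variable and the identification is the evident one; this matches the distributivity isomorphism on $\Mod(R_\ell(G))$.

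The main obstacle, and the step deserving the most care, is the bilinearity/additivity and naturality of Kasparov's cup-cap product — specifically confirming that the cup-cap product map is $\ZZ$-bilinear so that it descends from $\K^G(X_1,f_1^*\eta)\times\K^G(X_2,f_2^*\eta)$ to the tensor product over $\ZZ$, that it commutes with the connecting maps/additivity needed for the $\coprod$ compatibility, and that all the identifications $\CT(-\times-,\cdots)\cong\CT(-,\cdots)\otimes\CT(-,\cdots)$ are natural in a way compatible with the module actions. These are all properties established in \cite{Kas} (Definition 2.12, Theorem 2.14) and their equivariant analogues, so the proof is mostly a matter of carefully invoking them; I do not expect genuinely new difficulties, only bookkeeping, so I would present the argument as an assembly of Proposition~\ref{K-alg}, Corollary~\ref{K-mod2}, Proposition~\ref{morphism}, and the cited naturality properties of the cup-cap product.
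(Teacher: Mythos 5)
Your proposal follows essentially the same route as the paper: the lax structure map for $\fusion$ is Kasparov's cup-cap product composed with the identification $\CT(X_1\times X_2,(m\circ(f_1\times f_2))^*(\eta))\cong\CT(X_1,f_1^*(\eta))\otimes\CT(X_2,f_2^*(\eta))$ furnished by primitivity of $\eta$, the $\coprod$ structure map is the additivity isomorphism of equivariant K-homology, and the May compatibility reduces to bilinearity of the cup-cap product (which the paper dismisses as "clear"). The one small deviation is your unit constraint: since the paper equips ${\rm Mod}(R_\ell(G))$ with $\otimes=\otimes_\ZZ$, the unit object is $\ZZ$, so the unit map is the unique unital ring homomorphism $\ZZ\to\K^G(e)$ rather than a map out of $R_\ell(G)$.
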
 

\begin{proof}
Let $(X_1, f_1), (X_2, f_2) \in \eC(G, \eta)$ and recall that $\eF(X_i,f_i)=\K^G(\CT(X_i,f_i^*(\eta)))$ for $i=1,2$. Recall that $\K^G(X,f^*(\eta))=\K^G(\CT(X,f^*(\eta)))$. Then  
$$
{\eF}((X_1, f_1) \circledast (X_2, f_2)) = \K^G(X_1\times X_2,  (m\circ (f_1 \times f_2))^*(\eta))
$$
The cup-cap product in equation \eqref{cup-cap4} with  $A_1= \CT(X_1, f_1^*(\eta))$ and $A_2=  \CT(X_2, f_2^*(\eta))$
gives 
\begin{align*}
\otimes_\CC : \K^G(X_1, f_1^*(\eta)) \otimes  \K^G(X_2, f_2^*(\eta))& \longrightarrow \K^G(X_1\times X_2, (f_1\circ p_1)^*(\eta) 
+ (f_2\circ p_2)^*(\eta)) \\
& \cong  \K^G(X_1\times X_2,  (f_1 \times f_2)^* \circ m^*(\eta))\\
& \cong \K^G(X_1\times X_2, (m\circ (f_1 \times f_2))^*(\eta))
\end{align*}
where we have used the primitivity of $\eta$ for the last equality. Therefore we get a natural map
$$
{\eF}(X_1, f_1) \otimes {\eF}(X_2, f_2) \longrightarrow {\eF}((X_1, f_1) \circledast (X_2, f_2)) 
$$ There is also a canonical map relating the unit objects as follows:
$\ZZ {\map}{\fk}(e, \I) = \K^G(e),$ which is the unique unital ring homomorphism. Note that $\ZZ$ is the unit object in the monoidal category $\Mod(R_\ell(G))$ with respect to $\otimes_\ZZ$.

Also
\begin{align*}
{\eF}((X_1, f_1) \coprod (X_2, f_2)) &= \K^G(X_1\coprod X_2, ( f_1 \coprod f_2)^*(\eta))\\
& = \K^G(X_1,  f_1^*(\eta)) \oplus  \K^G(X_2,  f_2^*(\eta))\\
& = {\eF}(X_1, f_1) \oplus  {\eF}(X_2, f_2).
\end{align*}
The distributive property is clear, completing the proof of the theorem.

\end{proof}

\begin{rem}\label{gencat}
If $A$ is a $G$-$C^*$-algebra which is a $G$-coalgebra, define $\eC(A)$ to be the category of all
$G$-$C^*$-algebras that are $G$-comodules over $A$. Then there is a fusion product $\fusion$
on $\eC(A)$, and a lax monoidal functor 
$\cF: (\eC(A), \oplus, \fusion) \longrightarrow ({\rm Mod}(\K^G(A)), \oplus, \otimes)$ defined by $\eF(B)=\K^G(B)$.
The proof is similar to above and will be analysed in future work.
\end{rem}

\section{The category $\eD(G, \eta)$ and quantization}

In this section, we explore when objects in the category $\eC(G, \eta)$ can be quantized.
 To achieve this goal, we
define a closely related category $\eD(G, \eta)$, which objects are triples $(X, E,  f)$ where $(X, f)$ is an
object in the category $\eC(G,\eta)$ and
$E$ is a $G$-equivariant (complex) vector bundle over $X$.
In addition, we assume that $X$ has an {\em equivariant twisted Spinc structure}, that is, the following diagram commutes,
\beq
\xymatrix{
X_G\ar[r]^\nu \ar[d]_{f_G} & BSO\ar[d]^{\pi}\\
G_G\ar[r]_{\eta_G} & K(\ZZ, 3).
}
\eeq 
Here $K(\ZZ, 3)$ is the 3rd Eilenberg-Maclane space, $\nu$ is a continuous map classifying the stable normal bundle of 
the Borel construction $X_G=  EG \times_G X$ or equivalently, classifying the equivariant stable
normal bundle of $X$. Similarly $G_G =  EG\times_G G$ where $G$ acts on itself by conjugation
and $f_G$ is the map induced by $f$. Moreover $\pi$ is a continuous map determined by the Stieffel-Whitney class, up to homotopy.
Such a choice is fixed.
This implies that 
\beq\label{FW}
f_G^*(\eta_G) + W_3(X_G) = 0,
\eeq
where $\eta_G$ is the induced twisting on $G_G$.
This is the analogue of the Freed-Witten anomaly cancellation condition for D-branes in type II superstring theory, \cite{FW}. Geometrically, equation \eqref{FW} means that there is an equivariant isomorphism,
\beq
f^*(\eK_\eta) \cong {\rm Cliff}(TX)\otimes \eK,
\eeq
where $\eK_\eta$ is the algebra bundle of compact operators on $G$ determined by $\eta$, $ {\rm Cliff}(TX)$ denotes the Clifford algebra bundle associated to the tangent bundle of $X$, and $\eK$ denotes
the algebra of compact operators on a $G$-Hilbert space.
The objects of $\eD(G, \eta)$ are an equivariant analogue of twisted geometric $K$-cycles in \cite{Baum, Wang}.
The key observation made here is that in this special case, this category has a richer structure than usual,
given by the May structure.

$(G, {\bf 1}, \id:G\map G)$ is a final object in the category $\eD(G, \eta)$, where $\bf 1$ is the trivial line bundle over $G$. 
The morphisms of $\eD(G, \eta)$ are explicitly described in the text. In particular, a compact quasi-Hamiltonian $G$-manifold $(M, \omega, \Phi)$ determines the object $(M, {\bf 1}, \Phi)$ in $\eD(G, \eta)$, by a result in \cite{AM}. Clearly $\eD(G, \eta)$ is much larger, and it 
is closed under disjoint union $\coprod$, a dual operation, the fusion product $\fusion$ and 
also $G$-vector bundle modification, all of which will be explained in the text. Here we mention 
that the fusion product is $(X_1, E_1, f_1) \fusion (X_2, E_2, f_2) = (X_1 \times X_2, E_1\boxtimes E_2, m\circ (f_1 \times f_2))$, for objects $(X_j, E_j, f_j) \, j=1, 2$ in $\eD(G, \eta)$.
We verify that $\eD(G, \eta)$ is a strict monoidal category
that has a {May structure} given by $\coprod$ and $\fusion$.

 Let $(X_1, E_1, f_1) $ and $(X_2, E_2, f_2)$ 
denote equivariant twisted geometric K-cycles in  $\eD(G, \eta)$. They are said to be {\em isomorphic}
if there is an equivariant diffeomeophism
$\phi: X_1 \map X_2$ such that the following diagram commutes,
\begin{equation*}
\xymatrix{
X_1 \ar[ddr]_{f_1} \ar[rr]^{\phi} && X_2
\ar[ddl]^{f_2} \\ &&\\
& G   }  
\end{equation*}
That is, $f_1=f_2\circ \phi$. Moreover, it is assumed that there is an equivariant isomorphism
$\phi^*(E_2)\cong E_1$.

We now impose an equivalence relation $\sim$ on $\eD(G, \eta)$, generated by isomorphism and  
the following three elementary relations:
\begin{enumerate}

\item  {\bf Direct sum -  disjoint union.}
 Let $(X, E_1, f) $ and $ (X, E_2, f)$ 
denote equivariant twisted geometric K-cycles in  $\eD(G, \eta)$ with the same equivariant
twisted Spinc structure,
then their disjoint union is the equivariant twisted geometric K-cycle given by the direct sum,
\[
(X, E_1, f) \coprod  (X, E_2, f) \sim (X, E_1\oplus E_2, f).
\]

\item  {\bf Equivariant bordism.}
Given two equivariant twisted geometric K-cycles $(X_1, E_1, f_1) $ and $(X_2, E_2, f_2)$ such that 
there exists an equivariant twisted Spinc manifold with boundary  $W$, an equivariant vector bundle
$E$ over $W$ and a $G$-map $f: W \map G$ such that 
\[
\partial W=  -X_1 \coprod   X_2, \qquad \partial E= E_1 \coprod E_2, \qquad f\big|_{\partial W} = 
f_1\coprod f_2.
\]
Here $-X$ denotes  the $G$-manifold  $X$  with the  opposite equivariant twisted Spinc structure.
Then $(W, E, f)$ is said to be an equivariant bordism between the equivariant twisted geometric K-cycles
$(X_1, E_1, f_1) $ and $(X_2, E_2, f_2)$.\\

\item   {\bf Equivariant Spinc vector bundle modification.}
 Let 
  $(X, E, f)$ be an equivariant twisted geometric K-cycle and $V$ an equivariant 
  a  Spinc vector bundle over $X$ with 
  even dimensional fibers.  Denote by $\underline{\RR}$ the trivial rank one real 
  vector bundle. Choose an invariant Riemannian metric on $V\oplus \underline{\RR}$, let
  $$\widehat{X}= S(V\oplus \underline{\RR})$$  be the total space of 
  the sphere bundle of $V\oplus \underline{\RR}$, which is a $G$-manifold. 
  Then
   the vertical tangent bundle $T^{vert}(\widehat{X})$ of $ \widehat{X}$ admits a natural 
   equivariant Spinc structure
   with an associated $\ZZ_2$-graded equivariant spinor bundle  $S^+_V\oplus S^-_V$ . Denote by
  $\pi: \widehat{X} \to X$   the projection,  which is equivariantly  K-oriented.  Then
  the equivariant Spinc vector bundle modification of $(X, E, f)$ along the equivariant 
  Spinc vector bundle $V$, is the equivariant twisted geometric K-cycle
  $( \widehat{X}, \pi^*E\otimes S^+_V, f\circ \pi)$.
\end{enumerate}

\begin{defn} \label{twisted:geo} Denote by  $\K^G_{geo, \bullet}(G, \eta) = \eD(G, \eta)/\sim$ the 
geometric equivariant twisted K-homology. Addition in $\K^G_{geo, \bullet}(G, \eta) $ is given by disjoint union - 
direct sum relation. Note that the equivalence relation $\sim$ preserves the parity
of the dimension of the underlying equivariant twisted Spinc manifold. Let 
$\K^G_{geo}(G, \eta) $ denote the subgroup of $\K^G_{geo, \bullet}(G, \eta)$
determined by all geometric cycles with even dimensional
equivariant twisted Spinc manifolds. 
\end{defn}

Define the fusion product $\fusion$ of  equivariant twisted geometric K-cycles 
$(X_1, E_1, f_1) $ and $(X_2, E_2, f_2)$ as
\[
(X_1, E_1, f_1) \fusion (X_2, E_2, f_2)= (X_1 \times X_2, E_1\boxtimes E_2, m\circ (f_1\times f_2)).
\]
Here $m: G \times G \map G$ denotes the multiplication on the group, which is an equivariant map
with respect to the conjugation action of $G$ on itself. Then $(\eD(G, \eta), \coprod, \fusion)$ is 
a May category and we have,

\begin{prop}
The geometric equivariant twisted K-homology group $\K^G_{geo}(G, \eta) $ is a ring, with product 
induced by the fusion product $\fusion$.
\end{prop}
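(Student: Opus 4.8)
The plan is to verify that the fusion product $\fusion$ descends to a well-defined, associative, unital multiplication on the quotient $\K^G_{geo}(G,\eta) = \eD(G,\eta)/\!\sim$, and that it is bilinear with respect to the additive structure coming from disjoint union--direct sum. Since $(\eD(G,\eta),\coprod,\fusion)$ has already been declared a May category, the associativity (up to the canonical identification coming from associativity of $m$ and of $\boxtimes$) and the existence of the unit $(G,{\bf 1},\id)$ are inherited at the level of objects; the only genuine work is checking compatibility with $\sim$ and with $\coprod$.

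First I would check that $\fusion$ respects the equivalence relation in each variable separately, i.e.\ that if $(X_1,E_1,f_1)\sim(X_1',E_1',f_1')$ then $(X_1,E_1,f_1)\fusion(X_2,E_2,f_2) \sim (X_1',E_1',f_1')\fusion(X_2,E_2,f_2)$, and symmetrically; by transitivity it suffices to treat each of the three elementary moves plus isomorphism. Isomorphism is immediate since $\phi_1\times\id_{X_2}$ is an equivariant diffeomorphism intertwining the structure maps and pulling back $E_1'\boxtimes E_2$ to $E_1\boxtimes E_2$. For the direct sum--disjoint union move one uses the distributivity isomorphism $(E_1\oplus E_1')\boxtimes E_2 \cong (E_1\boxtimes E_2)\oplus(E_1'\boxtimes E_2)$ together with the May-category distributivity $(X_1\coprod X_1')\fusion(X_2,\ldots) \cong (X_1\fusion X_2)\coprod(X_1'\fusion X_2)$ already noted for $\eD(G,\eta)$. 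For bordism: if $(W,E,f)$ is an equivariant bordism from $(X_1,E_1,f_1)$ to $(X_1',E_1',f_1')$, then $(W\times X_2,\, E\boxtimes E_2,\, m\circ(f\times f_2))$ is an equivariant bordism between the two fusion products, once one checks that $W\times X_2$ carries the product equivariant twisted Spinc structure --- this is the one point deserving attention, and it follows because the twisted Spinc condition \eqref{FW} is additive under products: $(f\times f_2)^*$ applied to the primitive class together with the Whitney sum formula $W_3((W\times X_2)_G) = W_3(W_G) + W_3((X_2)_G)$ gives the required cancellation, using $m^*(\eta) = p_1^*(\eta)+p_2^*(\eta)$. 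For vector bundle modification one observes that $\widehat{X_1}\times X_2 \cong \widehat{X_1\times X_2}$ where on the right the modifying bundle is the pullback $p_1^*V$ (still an even-rank equivariant Spinc bundle), the spinor bundle and projection behave accordingly, and $(p_1^*(\pi^*E_1\otimes S_V^+))\boxtimes E_2$ matches the bundle prescribed by modifying the fusion product; so the modification move is carried to a modification move.

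Next I would check bilinearity: $\bigl((X_1,E_1,f_1)\coprod(X_1',E_1',f_1')\bigr)\fusion(X_2,E_2,f_2)$ is isomorphic to $\bigl((X_1,E_1,f_1)\fusion(X_2,E_2,f_2)\bigr)\coprod\bigl((X_1',E_1',f_1')\fusion(X_2,E_2,f_2)\bigr)$ by the distributivity isomorphism of the May structure, which after passing to $\sim$-classes is exactly the statement that multiplication distributes over addition in $\K^G_{geo}(G,\eta)$; symmetrically in the second variable. Associativity and the unit law then hold on $\sim$-classes because they already hold (strictly, or up to canonical isomorphism) on objects of $\eD(G,\eta)$, and isomorphic objects are identified in the quotient. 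Finally one should note that $\fusion$ preserves the even-dimensionality condition cutting out $\K^G_{geo}(G,\eta)$ inside $\K^G_{geo,\bullet}(G,\eta)$, since $\dim(X_1\times X_2) = \dim X_1 + \dim X_2$ is even when both summands are.

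The main obstacle is the bordism step: one must be careful that "equivariant twisted Spinc structure'' is genuinely a structure (a chosen nullhomotopy / chosen equivariant isomorphism $f^*(\eK_\eta)\cong \mathrm{Cliff}(TX)\otimes\eK$), not merely a condition, so one has to exhibit the product structure on $W\times X_2$ explicitly and check it restricts correctly to the boundary $(-X_1\coprod X_1')\times X_2$ with the opposite structure on the $-X_1$ part --- i.e.\ that orientation-reversal of the twisted Spinc structure is compatible with taking products. Everything else reduces to the distributivity and associativity already packaged into the assertion that $\eD(G,\eta)$ is a May category together with elementary properties of $\boxtimes$ and of the Whitney sum formula for $W_3$.
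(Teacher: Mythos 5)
Your proposal is correct, and it is worth noting that the paper itself gives \emph{no} proof of this proposition: it is stated as an immediate consequence of the (asserted) May structure on $(\eD(G,\eta),\coprod,\fusion)$, with none of the compatibility checks written out. What you supply is exactly the missing content. The genuinely nontrivial points are the ones you isolate: (i) that each of the three elementary moves in one fusion factor is carried to a move of the same type after fusing with a fixed cycle, with the bordism case requiring that $W\times X_2$ inherit an equivariant twisted Spinc structure --- which follows from primitivity $m^*(\eta)=p_1^*(\eta)+p_2^*(\eta)$ together with additivity of $W_3$ for products of oriented (Borel) manifolds, and a check that the chosen isomorphism $f^*(\eK_\eta)\cong{\rm Cliff}(TX)\otimes\eK$ restricts with the correct sign on the $-X_1$ boundary component; (ii) that bundle modification along $V$ on the first factor corresponds to modification along $p_1^*V$ on the product, via $S(V\oplus\underline{\RR})\times X_2\cong S(p_1^*V\oplus\underline{\RR})$ and $p_1^*S^{\pm}_V=S^{\pm}_{p_1^*V}$; and (iii) bilinearity over $\coprod$ via the distributivity isomorphism, plus preservation of parity. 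All of these check out as you describe them, so your argument is a complete and correct substitute for the proof the paper omits; the one place where you should insist on full detail in a final write-up is the bordism step, precisely because the twisted Spinc datum is a structure rather than a condition, as you yourself observe.
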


There is a quantization functor $\cQ: \eD(G, \eta) \map \K^G(G, \eta)$ which we recall here.
Recall that every equivariant twisted geometric K-cycle $(X, E, f)$ has a fundamental class 
$[X] \in \K^G(X, {\rm Cliff}(TX))$ as defined in \cite{Kas}, which is a Dirac type operator. Then 
$\cQ(X, E, f) = f_*([E]\cap [X]) \in K^G(G, \eta)$. Then we have

\begin{thm} The quantization  functor $\cQ: \eD(G, \eta) \longrightarrow \K^G(G, \eta)$, defined by
 $$\cQ(X, E, f)=  f_*([E]\cap [X]) $$ is monoidal, respecting the May structures on both categories. 
\end{thm}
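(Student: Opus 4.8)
The plan is to verify, step by step, that $\cQ$ respects the two monoidal structures $\coprod$ and $\fusion$ and their distributivity, using the naturality and multiplicativity properties of Kasparov's cup-cap product together with the primitivity of $\eta$. First I would check compatibility with $\coprod$: since the fundamental class and the K-theory class both decompose over a disjoint union, $[X_1\coprod X_2] = [X_1]\coprod [X_2]$ and $[E_1\coprod E_2]\cap[X_1\coprod X_2]$ splits, and pushforward along $f_1\coprod f_2$ lands in $\K^G(G,\eta)$ via the fold map; hence $\cQ((X_1,E_1,f_1)\coprod(X_2,E_2,f_2)) = \cQ(X_1,E_1,f_1) + \cQ(X_2,E_2,f_2)$, so $\cQ$ is (strictly) additive. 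I would also record $\cQ(G,{\bf 1},\id) = \id_*([{\bf 1}]\cap[G]) = 1 \in \K^G(G,\eta)$, the unit of the Verlinde ring, so the unit object goes to the unit.

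Next, the multiplicative part: given $(X_j,E_j,f_j)$, I need
$$
\cQ\big((X_1,E_1,f_1)\fusion (X_2,E_2,f_2)\big) = \cQ(X_1,E_1,f_1)\cdot \cQ(X_2,E_2,f_2)
$$
in $\K^G(G,\eta)$, where the product on the right is the fusion product on $\K^G(G,\eta)$ induced by $m^*$ as in the Corollary to Proposition \ref{K-alg}. The left side is $(m\circ(f_1\times f_2))_*\big([E_1\boxtimes E_2]\cap[X_1\times X_2]\big)$. I would first use the external product formula $[E_1\boxtimes E_2]\cap[X_1\times X_2] = ([E_1]\cap[X_1])\times([E_2]\cap[X_2])$, which holds because the fundamental class of a product is the external Kasparov product of the fundamental classes and cap product is compatible with external products. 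Then $(m\circ(f_1\times f_2))_* = m_*\circ(f_1\times f_2)_*$, and $(f_1\times f_2)_*$ applied to an external product is the external product $f_{1*}(\,\cdot\,)\times f_{2*}(\,\cdot\,) \in \K^G(G\times G, p_1^*\eta + p_2^*\eta) \cong \K^G(G,\eta)\otimes\K^G(G,\eta)$, where the last identification is exactly the cup-cap product $\otimes_\CC$ that defines the ring structure; here primitivity $m^*\eta = p_1^*\eta + p_2^*\eta$ is what makes $m_*$ land in the correct twisted group. Composing with $m_* = (m^*)^*$ on K-homology then reproduces, by definition, the fusion product of $\cQ(X_1,E_1,f_1)$ and $\cQ(X_2,E_2,f_2)$.

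Finally, the May-functor diagram: I would check that the distributivity isomorphism $(X\times(Y_1\coprod Y_2), \dots) \cong (X\times Y_1,\dots)\coprod(X\times Y_2,\dots)$ on $\eD(G,\eta)$ is carried by $\cQ$ to the distributivity $a\cdot(b+c) = a\cdot b + a\cdot c$ in the Verlinde ring, which follows formally once multiplicativity and additivity are established together with bilinearity of $\otimes_\CC$. The main obstacle I expect is the multiplicativity step: making rigorous the interplay between the external product of Kasparov fundamental classes, the functoriality of (twisted, equivariant) K-homology pushforward under the non-proper-in-general maps $f_j$ and $m$, and the identification of $\K^G(G\times G, p_1^*\eta+p_2^*\eta)$ with $\K^G(G,\eta)^{\otimes 2}$ via $\otimes_\CC$ — i.e.\ checking that the wrong-way maps $f_{j*}$ commute with cup-cap products exactly as needed. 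This is precisely the naturality and associativity of Kasparov's cup-cap product (Theorems 2.14 of \cite{Kas}) applied in the twisted equivariant setting, combined with the primitivity of $\eta$; once these compatibilities are in place the verification is a diagram chase analogous to those in the proofs of Propositions \ref{K-alg} and \ref{K-mod}.
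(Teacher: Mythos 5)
Your treatment of the two main points --- additivity over $\coprod$ via the splitting $[X_1\coprod X_2]=[X_1]\coprod[X_2]$, and multiplicativity over $\fusion$ via $(m\circ(f_1\times f_2))_* = m_*\circ(f_1\times f_2)_*$ together with the external-product decomposition of $[E_1\boxtimes E_2]\cap[X_1\times X_2]$ and the primitivity of $\eta$ --- is essentially the paper's own proof, which writes exactly the chain
$\cQ((X_1,E_1,f_1)\fusion(X_2,E_2,f_2)) = m_*\bigl({f_1}_*([E_1]\cap[X_1])\times {f_2}_*([E_2]\cap[X_2])\bigr) = \cQ(X_1,E_1,f_1)\circ\cQ(X_2,E_2,f_2)$ and then handles disjoint union and distributivity in the same way. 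The one genuine error is your identification of the unit object: the monoidal unit of $(\eD(G,\eta),\fusion)$ is $(e,{\bf 1},\I)$, where $e\in G$ is the identity element and $\I\colon e\hookrightarrow G$ is the inclusion (this is the unit exhibited for $\eC(G,\eta)$ in the strict monoidal structure proposition), whereas $(G,{\bf 1},\id)$ is the \emph{final} object of $\eD(G,\eta)$, not the unit. Accordingly the paper verifies $\cQ(e,{\bf 1},\I)=\I_*({\bf 1}\cap[e])=1$, which is immediate; your claim that $\id_*([{\bf 1}]\cap[G])=1$ in $\K^G(G,\eta)$ is a different and unjustified assertion --- the pushforward of the fundamental class of all of $G$ under the identity map is not the vacuum class of the Verlinde ring in general. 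Once the unit object is corrected, the remainder of your outline, including the concerns you raise about compatibility of wrong-way maps with the cup-cap product, coincides with the diagram chase the paper relies on.
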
 

\begin{proof}
Let $(X_1,E_1,  f_1), (X_2, E_2, f_2) \in \eD(G, \eta)$ and recall that 
$\cQ(X_i,E_i, f_i)={f_i}_*([E_i]\cap [X_i])$ for $i=1,2$. Then  
\begin{align*}
{\cQ}((X_1, E_1, f_1) \circledast (X_2, E_2, f_2)) &=  (m\circ (f_1 \times f_2))_*([E_1\boxtimes E_2] \cap [X_1\times X_2])\\
&= m_*({f_1}_*([E_1] \cap [X_1]) \times {f_2}_*([E_2] \cap [X_2]))\\
& = {f_1}_*([E_1] \cap [X_1]) \circ {f_2}_*([E_2] \cap [X_2])\\
& = {\cQ}(X_1, E_1, f_1)\circ {\cQ}(X_2, E_2, f_2)
\end{align*}
There is also a canonical map relating the unit objects as follows:
$\cQ(e, {\bf 1}, \I) = \I_*({\bf 1}\cap[e]) =1$ which is the unique unital ring homomorphism. 
Also by the disjoint union-direct sum property, one has
\begin{align*}
{\cQ}((X_1, E_1, f_1) \coprod (X_2, E_2, f_2)) &= {\cQ}((X_1\coprod X_2, E_1\coprod E_2, f_1 \coprod f_2))\\
& = ( f_1 \coprod f_2)_*([E_1\coprod E_2] \cap [X_1\coprod X_2])\\
& = ( f_1)_*([E_1] \cap [X_1]) \coprod   (f_2)_*([E_2] \cap [X_2])\\
& = {\cQ}(X_1, E_1, f_1) +  {\cQ}(X_2, E_2, f_2),
\end{align*}
since $ [X_1\coprod X_2] =  [X_1]\coprod [X_2]$ and $[E_1\coprod E_2] = [E_1]\coprod [E_2]$.
The distributive property is clear, completing the proof of the theorem.

\end{proof}

Finally, the following is a special case of a more general result,

\begin{prop}
The quantization  functor $\cQ$ induces an isomorphism of equivariant K-homology 
rings, $$\K^G_{geo}(G, \eta) \cong \K^G(G, \eta).$$
\end{prop}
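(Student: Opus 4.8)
The plan is to prove that the quantization functor $\cQ$ descends to a well-defined ring homomorphism on $\K^G_{geo}(G,\eta)$ and that this map is an isomorphism, by comparing it with the Baum--Douglas-style isomorphism between geometric and analytic (twisted equivariant) K-homology. First I would verify that $\cQ$ respects the three elementary moves generating $\sim$, so that it factors through the quotient: the direct sum--disjoint union relation is immediate from additivity of $\cQ$ (already used in the monoidality theorem above); invariance under equivariant $G$-bordism over $(G,\eta)$ follows because if $(W,E,f)$ is a bordism with $\partial W = -X_1 \coprod X_2$, then $f_*([E]\cap[W])$ provides a nullhomology in $\K^G(G,\eta)$ of $\cQ(X_2,E_2,f_2) - \cQ(X_1,E_1,f_1)$, using the compatibility of the Kasparov fundamental class and the cap product with boundaries; invariance under equivariant Spin$^c$ vector bundle modification is the Bott-periodicity/Gysin statement that $\pi_*([\pi^*E\otimes S^+_V]\cap[\widehat X]) = [E]\cap[X]$ for the sphere-bundle projection $\pi$, which is the equivariant twisted analogue of the classical fact underlying Baum--Douglas theory. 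Granting these, $\cQ$ is a well-defined homomorphism of abelian groups $\K^G_{geo}(G,\eta)\to \K^G(G,\eta)$, and the preceding theorem on monoidality of $\cQ$ together with the ring structure on $\K^G_{geo}(G,\eta)$ coming from $\fusion$ shows it is a ring homomorphism.

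Next I would address bijectivity. The strategy is to identify $\K^G_{geo}(G,\eta)$ with a geometric model of the twisted equivariant K-homology $\K^G(G,\eta)$ and recognize $\cQ$ as the canonical comparison map. Concretely, an object $(X,E,f)$ with the equivariant twisted Spin$^c$ condition $f^*(\eK_\eta)\cong \mathrm{Cliff}(TX)\otimes\eK$ carries a Kasparov fundamental class $[X]\in\K^G(X,\mathrm{Cliff}(TX))$, and $f_*([E]\cap[X])$ is exactly the image under $f_*$ of a class in $\K^G(X, f^*(\eta))$; the assignment $(X,E,f)\mapsto ([E]\cap[X])\in\K^G(X,f^*(\eta))$ followed by $f_*$ is the twisted equivariant Baum--Douglas map. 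Surjectivity: every class in $\K^G(G,\eta)$ is represented by some equivariant twisted geometric cycle over $(G,\eta)$ — this is the content of the (twisted, equivariant) geometric model of K-homology, and in the situation at hand one may also use the explicit generators of $R_\ell(G)\cong\K^G(G,\eta)$ coming from conjugacy classes / quasi-Hamiltonian cycles, which all lie in the image of $\cQ$. Injectivity: if $(X,E,f)$ has $\cQ(X,E,f)=0$ in $\K^G(G,\eta)$, one must produce an equivariant bordism (and bundle modifications) exhibiting $(X,E,f)\sim 0$; this is where one invokes the comparison theorem that the geometric-to-analytic map is an isomorphism, proved by a Mayer--Vietoris / skeletal induction or by normalizing cycles to a standard form.

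The main obstacle is precisely the foundational comparison theorem — that geometric equivariant twisted K-homology $\K^G_{geo, \bullet}(G,\eta)$ is isomorphic to the analytic group $\K^G(G,\eta)$ via the map $[(X,E,f)]\mapsto f_*([E]\cap[X])$. In the untwisted non-equivariant case this is the Baum--Douglas theorem; in the twisted non-equivariant case it is the work of Baum--Carey--Wang and Wang; combining twists with $G$-equivariance requires a hybrid of those techniques together with the equivariant geometric K-homology of Baum--Oyono-Oyono-Schick-Walter-type arguments. The excerpt explicitly signals that this is "a special case of a more general theorem which will be proved elsewhere, which uses a hybrid of techniques in [Baum, Wang] and [Baum2]," so for the purposes of this paper I would cite that forthcoming result as a black box and confine the proof here to (i) checking that $\cQ$ is well-defined on equivalence classes and multiplicative, and (ii) observing that under the identification furnished by the cited comparison theorem, $\cQ$ becomes the identity map — whence it is an isomorphism of rings. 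The only genuinely new verification, beyond invoking that theorem, is the ring-homomorphism property, which is already handled by the monoidality theorem proved just above, so the proof of this proposition is short modulo the cited input.
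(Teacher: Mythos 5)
Your proposal takes essentially the same route as the paper, which in fact offers no proof at all beyond deferring the core geometric-to-analytic comparison isomorphism to a more general theorem ``to be proved elsewhere'' using a hybrid of the techniques of Baum--Wang and Baum--Oyono-Oyono--Schick--Walter. Your additional verifications --- that $\cQ$ descends through the three elementary moves (direct sum--disjoint union, equivariant bordism, Spin$^c$ vector bundle modification) and that the ring-homomorphism property follows from the monoidality theorem --- are correct and supply routine detail that the paper leaves entirely implicit, while the genuinely hard content (surjectivity and injectivity of the comparison map) is black-boxed in both your write-up and the paper itself.
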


This is a special case of a more general theorem which will be proved elsewhere, which uses a hybrid of 
techniques in  \cite{Baum, Wang} and \cite{Baum2}. 

An impact of this 
result is that the May structure on the category $\eD(G, \eta)$ induces the algebra structure on $\K^G(G, \eta)$,
which by \cite{FHT1, FHT2, FHT3} is just the Verlinde algebra $R_\ell(G)$.\\

\section{Outlook and questions}

For $G$ as in the paper, recall that the Verlinde algebra $R_\ell(G)$
consists of equivalence classes of positive energy representations (at a fixed level $\ell$)
of the free loop group $LG$, \cite{PS}. The theorem of Freed-Hopkins-Teleman \cite{FHT1, FHT2, FHT3}
establishes an explicit isomorphism between $\K^G(G,\eta)$ and $R_\ell(G)$, where $\eta$ is a degree 3-twist
on $G$ that determines the level $\ell$. This isomorphism is given by a projective family of 
second quantized supersymmetric Dirac operators, coupled to a positive energy representation. 

Given $(X, f)$ an object in $\eC(G, \eta)$, is it possible to construct a representation theoretic 
group $R_\ell(X, f)$, which is a module over $R_\ell(G)$, together with an explicit isomorphism 
between $K^G(X, f^*(\eta))$ and $R_\ell(X, f)$ that generalizes the Freed-Hopkins-Teleman 
isomorphism?

The algebraic properties of the Verlinde algebra $R_\ell(G)$ are well understood by now. 
Given $(X, f)$ an object in $\eC(G, \eta)$, it would be interesting to understand the algebraic 
properties of the Verlinde module $K^G(X, f^*(\eta))$.

\appendix

\newpage

\section{More examples of objects in the category $\eC(G, \eta)$}

 \begin{ex} (1st Iterated product) 
 
 Consider $G^{r} = \Hom(F^{r},G)$, where $F^{r}$ is the free group on $r$ generators, with $G$-action
 given by the diagonal $G$-action. Consider the smooth map 
 \begin{align*}
 \lambda & : G^{r} \longrightarrow G\\
 & (g_1, \ldots, g_r)  \longrightarrow \prod_{j=1}^n g_j
 \end{align*}
which is equivariant for the adjoint action of $G$ on $G$. Therefore $(G^{r},  \lambda) \in \eC(G,\eta)$
for any $G$-vector bundle over $G^r$.

Because $\eta$ is a primitive class, by induction on $r$ one sees that 
$$
\lambda^*(\eta) = p_1^*(\eta) + \cdots + p_r^*(\eta).
$$
where $p_j$ denotes the projection to the $j$-th factor of $ G^{r}$ for $j=1, \ldots , r$.
 \end{ex}

  \begin{ex}  (2nd Iterated product) 
  
  In the notation above, consider $G^{2r} = \Hom(F^{2r},G)$, with $G$-action
 given by the diagonal $G$-action. Consider the smooth map
 \begin{align*}
 \tau & : G^{2r} \longrightarrow G\\
 & (g_1, h_1 \ldots, g_r, h_r)  \longrightarrow \prod_{j=1}^n [g_j, h_j]
 \end{align*}
 where $ [g_j, h_j] = g_j h_j g_j^{-1} h_j^{-1}$ denotes the group commutator.
Then $\tau$ is an equivariant map for the adjoint action of $G$ on $G$. 
Therefore $(G^{2r},  \tau) \in \eC(G,\eta)$. 
Also one computes that
$$
\tau^*(\eta) =  0.
$$

 \end{ex}

\begin{ex} (Products of spaces in $\eC(G,\eta)$ part 1) 

Suppose that $(X_j,  f_j)  \in \eC(G,\eta)$
for $j=1, \ldots ,r$. Then the Cartesian product 
$$
\prod_{j=1}^r f_j \colon \prod_{j=1}^r X_j \longrightarrow G^r.
$$
Composing with the map $\lambda$ above, we see that 
$$(\prod_{j=1}^r X_j, \,\lambda\circ (\prod_{j=1}^r f_j ))
\in \eC(G,\eta)$$ and  $$(\prod_{j=1}^r X_j,  \,\lambda\circ (\prod_{j=1}^r f_j ))
\in \eC(G,\eta).$$

Finally, using the 1st iterated product example, we see that
$$
 (\prod_{j=1}^r f_j )^*\lambda^*(\eta) = f_1^*p_1^*(\eta) + \ldots +  f_r^*p_r^*(\eta).
$$

\end{ex}

 \begin{ex} (Products of spaces in $\eC(G,\eta)$ part 2) 
 
 Suppose that $(X_j, f_j)$ are objects in  $\eC(G,\eta)$
for $j=1, \ldots ,2r$. Then the Cartesian product 
$$
\prod_{j=1}^{2r} f_j \colon \prod_{j=1}^{2r} X_j \longrightarrow G^{2r}.
$$
Composing with the map $\tau$ above, we see that 
$$(\prod_{j=1}^r X_j, \,\tau\circ (\prod_{j=1}^r f_j ))
\in \eC(G,\eta)$$ and  $$(\prod_{j=1}^r X_j,  \,\tau\circ (\prod_{j=1}^r f_j ))
\in \eC(G,\eta).$$

Using the 2nd iterated product example$$
 (\prod_{j=1}^{2r} f_j )^*\tau^*(\eta) =0.
$$

\end{ex}

\begin{ex}(Examples from Lie groups)

Let $G$ be a compact Lie group, and $T$ a torus subgroup of $G$, for example 
the maximal torus. Then there is a canonical map,
\begin{align*}
p &: G/T\times T\map G\\
&  (gT,t) \mapsto gtg^{-1}
\end{align*}
The smooth map $p$ is equivariant for the action of $G$ on $G/T\times T$ given by,
\begin{align*}
G \times (G/T\times T) &\longrightarrow G/T \times T\\
( g_1, (gT, t)) &\longrightarrow (g_1gT, t)
\end{align*}
and for the adjoint action of $G$ on $G$. Therefore $(G/T \times T,  p) \in \eC(G,\eta)$.

The principal torus bundle 
$T \to G \to G/T$ has first Chern class $a \in H^2(G/T; \widehat{T})$, where $\widehat T$ denotes
the Pontryagin dual of $T$. Also, the 
universal cover of $T$ determines a class $b \in H^1(T; \widehat T)$. Then a calculation shows that 
$$
p^*(\eta) = p_1^*(a) \cup p_2^*(b) \in H^3(G/T \times T; {\mathbb Z})
$$
where we have used the pairing $\widehat{T} \times \widehat{T} \to {\mathbb Z}$ given by the dot product,
$(\underline{n}, \underline{m}) \to \underline{n}.\underline{m}$. 

\end{ex}


\end{document}